\def\D{ {\cal D} }
\def\E{ {\cal E} }
\def\H{ {\cal H} }
\def\I{ {\cal I} }
\def\L{ {\cal L} }
\def\N{ {\cal N} }
\def\U{ {\cal U} }
\def\V{ {\cal V} }
\def\bc{\mathbf{C}}
\def\>{\rangle}
\def\<{\langle}
\newcommand{\abs}[1]{\left| {#1} \right|}
\newcommand{\ketbra}[2]{\ensuremath{\left|#1\right\rangle\!\!\left\langle#2\right|}}
\renewcommand{\proj}[1]{\ensuremath{\left|#1\right\rangle\!\!\left\langle#1\right|}}
\newcommand{\Tr}{\mathrm{Tr}}
\newcommand{\fc}{\mathbf{C}}
\newcommand{\kkhide}[1]{}
\definecolor{ppblue}{RGB}{46,117,182}
\definecolor{ppred}{RGB}{197, 90, 17}
\theoremstyle{plain}
\newtheorem{thm}{Theorem}
\newtheorem{lem}[thm]{Lemma}
\newtheorem{prop}{Proposition}
\theoremstyle{definition}
\newtheorem{defn}{Definition}
\begin{document}

\title{Dephasing Noise Simulation for Coherence-Generating Devices}

\author{Roberto Salazar}
\affiliation{Faculty of Physics, Astronomy and Applied Computer Science, Jagiellonian University, 30-348 Krak\'{o}w, Poland}
\affiliation{International  Centre  for  Theory  of  Quantum  Technologies,  University  of  Gdansk, Wita Stwosza 63, 80-308 Gdansk,  Poland}

\author{Fereshte Shahbeigi}
\affiliation{Faculty of Physics, Astronomy and Applied Computer Science, Jagiellonian University, 30-348 Krak\'{o}w, Poland}
\affiliation{RCQI, Institute of Physics, Slovak Academy of Sciences, D\'{u}bravsk\'{a} cesta 9, 84511 Bratislava, Slovakia}

\begin{abstract}
Advancing quantum technologies necessitates an in-depth exploration of how operations generate quantum resources and respond to noise. Crucial are gates generating quantum coherence and the challenge of mitigating gate dephasing noise. Precisely, we study the dephasing noise that reduces the coherence-generating power of quantum gates, its simulation, and critical factors. Our primary contribution lies in a theorem characterizing the full set of dephasing noises in gates, adaptable to the simulation by any predefined operation set. In particular, we apply our result to quantify the memory adaptability required for a dephasing noise to arise. Furthermore, we analytically calculate the quantifier for gates acting on qubit systems, thereby fully characterizing this scenario. Next, we show how our results reveal the structure of non-trivial dephasing noise affecting qubit gates and apply them to experimental data, conclusively demonstrating the existence of a gate's dephasing noise, which is irreducible to dephasing of either input or output states. Finally, we show how our study contributes to addressing an open question in the resource theory of coherence generation.
\end{abstract}

\maketitle
\section{Introduction}
\label{sec:intro}
\vspace{-0.4 cm}

A fundamental objective of quantum information is to study the quantum
properties of physical systems and demonstrate their advantages for
the development of new technologies \cite{QT1,QT2}. The investigation
and characterization of the practical benefits of multiple quantum
properties are currently carried out within the theoretical framework
of resource theories \cite{Gour2019,BG15,streltsov2017colloquium,streltsov2016quantum,TakagiRegula2019,Regula2020,Paul2019,Oszmaniec2019}. This class of theories allows organizing
quantum devices as resources according to their usefulness to exploit
a specific quantum property for practical purposes. Furthermore, resource theories allow the systematic study of the possible interconversions of such devices under procedures that do not increase the resource, denoted as free operations \cite{Gour2019}.
Channel resource theories are particularly relevant for quantum computing, as the branch that studies devices that transform quantum systems, such as quantum gates \cite{Carlo2020,Liu2020,Li2020,Winter2019}. In channel resource theories, the allowed free operations are superchannels whose implementation requires a quantum memory, pre- and post-processing channels \cite{KEYL2002}. Crucial is the theory of channels with the capacity to generate coherence due to the relevance, extensive research, and applications of the state's coherence \cite{streltsov2017colloquium,streltsov2016quantum,Zhang90,lostaglio2015description,lostaglio2015quantum,korzekwa2016extraction}. Moreover, recent advances show how coherence generation plays a fundamental role in the advantage provided by the famous Shor's algorithm \cite{Ahnefeld2022}.

However, quantum effects are inherently noise-sensitive, rendering any quantum advantage vulnerable to uncontrolled interactions with the environment \cite{Knill1997}. Thus, managing noise and decoherence effects is the main challenge in developing devices generating quantum resources. While experimental techniques play a crucial role, theoretical investigations also contribute significantly to progress in the field. One promising approach involves studying the mathematical structure of noise models to gain deeper insights into their properties and their impact on quantum resources \cite{Jiang2021,DephSuper2021,cai2023}.

In the context of devices generating a concrete resource, the essential noise to analyze would be solely degrading the resource's generation. Consequently, in the case of coherence-generating devices, we focus on dephasing noise over gates,  which affects the gate's action on coherences but does not affect the action on occupations.  Although dephasing noise models for states, known as dephasing channels, have been extensively researched and applied \cite{Breuer2007,wiseman2009}, their extension to gates is a recent development \cite{DephSuper2021,Daly2023}. This novel mathematical model, termed dephasing superchannels, was thoroughly elaborated upon in reference \cite{DephSuper2021}, offering a comprehensive understanding of its mathematical actions and physical implementation. Importantly, dephasing superchannels exhibit a distinct phenomenological novelty compared to dephasing channels \cite{DephSuper2021}, requiring further investigation.

In this work, our principal achievement lies in a formula that directly computes a pertinent Gram matrix from the simulation's operations and memory. The above Gram matrix identifies the simulated dephasing superchannel, and hence our formula simplifies the task of ascertaining whether a given set of operations and memory can simulate a specific noise. Moreover, the contrast with the simulations allows us to unequivocally identify the need for specific physical factors in the appearance of noise. This crucial observation suggests a compelling method to explore the phenomenology of dephasing superchannels, as illustrated in this work.

Our exploration of the formula's application starts investigating the memory activity required for simulating dephasing superchannels. Researchers have long recognized that the complexity of noise in gates arises from potential correlations between input and output systems, simulated through memory \cite{Chiribella2008}. However, one can model these correlations either using a memory with storage independent of the channel's input or employing a memory that dynamically adapts to such input state \cite{Megier2021,Elliot2022}. We denote the former as  \emph{passive} memory and the latter, as \emph{active} memory.

Employing our main theorem, we characterized the set of dephasing superchannels simulable by a passive memory and operations selected from arbitrary sets. Moreover, we introduced a quantifier of memory activity
based on the minimum distance between the Gram matrix identifying the target dephasing noise and those from the above set.

Contrary to expectations in \cite{DephSuper2021}, we uncovered the necessity of active memories, even when gates act on qubit systems. Concretely, we comprehensively determined dephasing noises due to passive memories for qubits and derived an analytical formula for the quantifier of memory activity. Furthermore, we applied our quantifier to analyze recent experimental data from nuclear magnetic resonance implementations, revealing an unprecedented phenomenology in dephasing quantum gate noise.

 Additionally, we exploit our results to address a stronger version of a long-standing conjecture in resource-generating theories, opening up new possibilities for resolving the main conjecture. Specifically, the conjecture \cite{Liu2020, Li2020, Winter2019} postulates that any free operation within the resource theory of coherence generation is implementable through memory and pre- and post-processing maximally incoherent operations (MIO) \cite{aberg2006superposition}. However, when we strengthen the conjecture and exclusively consider using passive memories in the implementation, our findings reveal dephasing superchannels that serve as compelling counterexamples. In addition, we outline a promising research direction that may address the most general case, a prospect we leave for future research.

The organization of this article is as follows: In Section \ref{sec:preliminary}, we
present the concepts of quantum mechanics which are relevant to our
work. Section \ref{sec:resource-theory} describes the formalism of channel resource theories.
In Section \ref{sec:dephasing-superchannel}, we introduce the class of dephasing superchannels.
Section \ref{sec:results} provides our main results, with an extensive presentation of the two applications. Finally, Section \ref{sec:discussion} discusses
the significance of our contribution, and outlines further research and
open questions.

\section{Preliminary concepts}
\label{sec:preliminary}
Our research is carried out using the standard Hilbert space formalism
for quantum systems, in which a preparation is represented by a positive
semidefinite operator $\rho$ of trace one, which belongs to the space
of bounded operators $\mathcal{B}\left(\mathcal{H}\right)$ of the
Hilbert space $\mathcal{H}$ and the allowed physical transformations
are represented by completely positive and trace preserving (CPTP)
maps \cite{KEYL2002,peres1995quantum}. In this article, we are mostly concerned
with finite degrees of freedom in which case the dimension of $\mathcal{H}$
is a natural number $d$ and operator $\rho$ reduces to a complex
density matrix of $d\times d$. The CPTP maps $\mathcal{E}$
are called  quantum channels and can be modeled in various ways, such
as the\emph{ Stinespring dilation} representation \cite{KEYL2002}:
\begin{align}
\mathcal{E}\left(\rho\right)=\mathrm{Tr}_{E}\left\{ U\left(\rho\otimes\left|0\right\rangle _{E}\!\left\langle 0\right|\right)U^{\dagger}\right\}
\end{align}
where $\mathcal{E}$ is realized by a unitary dynamics $U$
of an extended system followed by discarding the ancillary system
$E$. Another relevant representation for us is provided by the Choi-Jamio\l kowski
isomorphism \cite{choi1975completely,jamiolkowski1972linear} which leads to the \emph{Jamio\l kowski state} \cite{jamiolkowski1972linear}:
\begin{align}\label{eq:choi}
J\left(\mathcal{E}\right):=\mathcal{E}\otimes\mathcal{I}\left(\left|\Psi\right\rangle \!\left\langle \Psi\right|\right),\quad\left|\Psi\right\rangle:=\frac{1}{\sqrt{d}}\sum_{i=1}^{d}\left|ii\right\rangle
\end{align}
in which $\mathcal{I}$ is the identity channel and $\left|\Psi\right\rangle$ the maximally entangled state. Notably, the complete positivity of $\mathcal{E}$ is equivalent
to $J\left(\mathcal{E}\right)\geq0$ while the trace-preserving condition
gets mapped to $\mathrm{Tr}_{1}\left\{ J\left(\mathcal{E}\right)\right\} =\frac{1}{d}I$, with $I$ the identity matrix.

Moreover, for a given channel $\E(A\rightarrow B)$, with Kraus operator $\{K_n\}$ representation $\E(\rho)=\sum K_n\rho K_n^\dagger$, there exists the superoperator $\Phi^{\E}$ which is a  $d^2_B\times d^2_A$ dimensional matrix defined by
    \begin{align}
    \label{eq:superoperator}
        \Phi^{\E}_{\stackrel{\scriptstyle i j}{k l}}=\bra{i}\E\left(\ketbra{k}{l}\right)\ket{j}
    \end{align}
with a natural decomposition into Kraus operators  \cite{KarolZ2009}: $\Phi^{\E}=\sum_n K_n\otimes K_n^\ast$. The superoperator $\Phi^{\E}$ is a powerful technical tool, as it allows us to manipulate density matrices and channels in a similar way, as we do with vectors and matrices respectively. Additionally, $\Phi^{\E}$ is simply related to the  Jamio\l kowski state of the map by a special reordering of its entries called reshuffling \cite{KarolZ2009},
\begin{align}
    \label{eq:reshuffling}
    \Phi^{\E}_{\stackrel{\scriptstyle i j}{k l}}=d\times(J(\E)^R)_{\stackrel{\scriptstyle i j}{k l}}=d\times J(\E)_{\stackrel{\scriptstyle i k}{j l}}.
\end{align}


Other important notions for our investigation
are the set of incoherent states $\sigma=\sum_{i=0}^{d-1}p_{i}\left|i\right\rangle \!\left\langle i\right|$, with  $\{p_{i}\}$ a probability distribution over the preferred basis $\left\{ \left|i\right\rangle \right\} _{i=0}^{d-1}$ and the  set of  operations that leaves this set invariant known as \emph{maximally
incoherent operations} (MIO) \cite{aberg2006superposition}. Precisely, the later operations are the most
general class of resource non-generating operations in the resource
theory of coherence \cite{streltsov2017colloquium}.

A relevant subset of MIO is the so-called \emph{dephasing channels} \cite{li1997special}, which leave invariant the diagonal terms of any quantum state $\rho$. A dephasing channel $\mathcal{D}_{\mathrm{C}}$ is uniquely determined by a Gram matrix  $\mathrm{C}$, i.e. a positive matrix with all diagonal entries equal to one. Then, the action of $\mathcal{D}_{\mathrm{C}}$ on a quantum state $\rho$ is given by:
\begin{align}
\mathcal{D}_{\mathrm{C}}(\rho)=\rho\odot\mathrm{C}\label{eq:dephshur}
\end{align}

with $\odot$ a Schur product, which outputs a matrix with each element
the entry-wise multiplication of the two input matrices \cite{kye1995positive,li1997special}.
When $\mathrm{C}=I$, the identity matrix, the corresponding dephasing channel  $\mathcal{D}_{I}$ maps any state to a density matrix with the same diagonal, but zero coherence. The channel $\mathcal{D}_{I}$ denotes \emph{maximally dephasing channel} and has the following commutation relation with any MIO operation $\phi$ \cite{Gour2019}:

\begin{align}
\phi\circ\mathcal{D}_{I}= \mathcal{D}_{I }\circ\phi\circ\mathcal{D}_{I}\label{eq:miocomm}
\end{align}

Moreover, we can use $\mathcal{D}_{I}$ to write the specific action of a channel $\mathcal{E}$ over the diagonal terms of a state $\rho$ as:

\begin{align}
\mathcal{D}_{I }\circ\mathcal{E}\circ\mathcal{D}_{I}\label{eq:classact}
\end{align}

which we call the \emph{classical action} of the channel $\mathcal{E}$. A classical action is equivalent to a stochastic transition matrix $ T_{\mathcal{E}}$ acting on the diagonal terms of the density matrix of a state.
\section{Resource theories of Channels}
\label{sec:resource-theory}
Currently, the most fundamental quantum objects investigated in resource
theories consist of states $\rho$, measurements $\mathbf{M}$, and
quantum channels $\mathcal{E}$ \cite{Gour2019}. Of the three classes of objects mentioned,
the most complex are quantum channels, and consequently, their general
resource theoretical study has only recently started \cite{Liu2020,Li2020,Winter2019}. A channel
resource theory consists of the triple $\left\{ \mathfrak{F},\mathfrak{O},\Omega\right\} $,
where $\mathfrak{F}$ is a free set of CPTP maps, $\mathfrak{O}$
a set of free superchannels that map CPTP maps into CPTP maps and
$\Omega$ a measure of resourcefulness which maps CPTP maps into non-negative
numbers. As usual in the resource theoretical framework the superchannels
$\mathfrak{O}$ should map elements of the free set $\mathfrak{F}$ into itself,
the measure $\Omega$ must be zero for any free channel $\mathcal{T}\in\mathfrak{F}$
and it should not increase under any free superchannel $\Xi\in\mathfrak{O}$,
i.e. for all CPTP maps $\mathcal{E}$,
we have:
\begin{align}
\Omega\left(\Xi\left[\mathcal{E}\right]\right)\leq\Omega\left(\mathcal{E}\right).
\end{align}
Additionally, it is usual that free operations $\mathfrak{O}$ include
left or right composition $\circ$ and tensoring $\otimes$ under
free maps $\mathcal{T}\in\mathfrak{F}$ with the identity map \cite{Winter2019}.

A reason for the complexity and wide phenomenology associated with channel resource theories lies in the structure of their free operations
$\mathfrak{O}.$ In the most general sense, any physically realizable superchannel $\Xi$
is composed of three elements: memory systems $D^{\prime\prime},D^{\prime}$, and $D$, an encoding map $\mathcal{N}_{en}\left(A^{\prime}\otimes D^{\prime}\rightarrow A\otimes D\right)$, and
a decoding map $\mathcal{N}_{de}\left(B\otimes D\rightarrow B^{\prime}\otimes D^{\prime\prime}\right)$
 followed by tracing out the memory, which are applied on a channel $\mathcal{E}\left(A\rightarrow B\right)$
in the following way:
\begin{align}
\!\Xi\left[\mathcal{E}\right](\rho_{A^{\prime}})\!=\! \Tr_{D^{\prime\prime}}\!\circ\mathcal{N}_{de}\circ\left(\mathcal{E}\otimes\mathcal{I}_{D}\right)\circ\mathcal{N}_{en}(\rho_{A^{\prime}}\otimes\tau_{D^{\prime}}) \label{eq:superchannel}
\end{align}
giving the channel $\Xi\left[\mathcal{E}\right]\left(A^{\prime}\rightarrow B^{\prime}\right)$. Diagrammatically,

\begin{equation}
		\label{eq:diagram1}
		\begin{quantikz}
			& \gate{\Xi[\E]} &\qw
		\end{quantikz}
		=
		\begin{quantikz}
			&[-0.1cm]	\gate[wires=2]{\mathcal{N}_{en}}&[-0.1cm] \gate{\E} &[-0.1cm] \gate[wires=2]{\mathcal{N}_{de}}&[-0.6cm]\qw \\[-0.5cm]
				& 	 &\qw  	& 	& [-0.6cm]	\trash{\text{\emph{discard}}}
		\end{quantikz}\!\!\!.
\end{equation}

Due to the generality, non-uniqueness, and complexity of representation
(\ref{eq:superchannel}), it became a widely accepted approach to construct channel resource theories based on an underlying state
resource theory \cite{Liu2020,Li2020, Winter2019,Saxena2020,Chen2020entanglement}. Let us define $\left\{ F_{s},O_{s},M_{s}\right\} $ respectively as the triad of free states, free operations, and monotonic measure of the state resource theory.
The method mentioned above focuses on the potential of a channel $\mathcal{E}$
to generate or increase state resources and build up the free superchannels
$\Xi$ using elements from the associated state resource theory. The previous approach restricts the initial state of the ancillary system
$D$ in (\ref{eq:superchannel}) to be a free state $\tau_{D^{\prime}}\in F_{s}$,
and both encoding and decoding maps
to be free operations, i.e  $\mathcal{N}_{en},\, \mathcal{N}_{de}\in O_{s}$. It is straightforward to
check that superchannels satisfying the preceding conditions preserve
free channels, and due to their constructive nature, were denoted
as \emph{freely realizable} operations in \cite{Carlo2021}.

\section{Dephasing superchannels}
\label{sec:dephasing-superchannel}
In reference \cite{DephSuper2021} it was introduced a class
of superchannels denoted as \emph{dephasing superchannels} which
models dephasing noise on gates. The formal definition of a dephasing superchannel is as follows:

\begin{defn}[Dephasing superchannel]
    A quantum superchannel $\Xi$ is called a dephasing superchannel if the transition probabilities in the distinguished basis are invariant under $\Xi$:
	\begin{align}
    	\forall~\E,\ket{i},\ket{j}:\quad \bra{i}\Xi[\E](\ketbra{j}{j})\ket{i}=\bra{i}\E(\ketbra{j}{j})\ket{i}.
	\end{align}	
\end{defn}

Essentially, a dephasing superchannel leaves invariant the classical action of any channel $\E$:
\begin{align} \label{eq:invariant-classical}
\mathcal{D}_{I}\circ\Xi\left(\mathcal{E}\right)\circ\mathcal{D}_{I}=\mathcal{D}_{I}\circ\mathcal{E}\circ\mathcal{D}_{I}
 \end{align}

Analogously to dephasing channels, a dephasing superchannel is uniquely determined by a Gram matrix. From \cite{DephSuper2021} we know that the Gram matrix $\mathbf{C}$ identifying a dephasing superchannel $\Xi_{\mathbf{C}}$  has a further structure:

	\begin{align}
	\label{eq:corr_form}
	\mathbf{C}=\left[\begin{array}{cccc}
	{C_{00}} & {C_{01}} & {\ldots} & {C_{0\, d-1}} \\
	{C_{10}} & {C_{00}} & {\ldots} & {C_{1\, d-1}} \\
	{\vdots} & {\vdots} & {\ddots} & {\vdots} \\
	{C_{d-1\, 0}} & {C_{d-1\,  1}} & {\ldots} & {C_{00}}
	\end{array}\right],
	\end{align}
	where $C_{ij}$ are $d\times d$ matrices and $C_{00}$ is a Gram matrix itself.

Additionally, a dephasing superchannel $\Xi_{\mathbf{C}}$
has a physical realization as unitary pre- and post-processing with
an ancillary system of dimension $d$ as follows:
\begin{align}
\Xi_{\mathbf{C}}\left[\mathcal{E}\right]\left(\rho\right)=\mathrm{Tr}_{D}\left\{ \mathcal{V}\circ\left[\mathcal{E}\otimes\mathcal{I}_{D}\right]\circ\mathcal{U}\left(\rho_{A}\otimes\left|0\right\rangle _{D}\!\left\langle 0\right|\right)\right\}
\end{align}
or as a circuit diagram:
\begin{equation}
		\label{eq:diagram}
		\begin{quantikz}
			& \gate{\Xi_C[\E]} &\qw
		\end{quantikz}
		=
		\begin{quantikz}
			&[-0.1cm]	\gate[wires=2]{\U}&[-0.1cm] \gate{\E} &[-0.1cm] \gate[wires=2]{\V}&[-0.6cm]\qw \\[-0.5cm]
			\ketbra{0}{0}	&  	 & \qw 	&	& [-0.6cm]	\trash{\text{\emph{discard}}}
		\end{quantikz}\!\!\!,
	\end{equation}
where,
\begin{eqnarray}
\label{eq:CU}
\mathcal{U}\left(\cdot\right)=U\left(\cdot\right)U^{\dagger}, \  &  & U=\sum_{i=0}^{d-1}\left|i\right\rangle \!\left\langle i\right|\otimes U_{i}\\
\label{eq:CV}
\mathcal{V}\left(\cdot\right)=V\left(\cdot\right)V^{\dagger},\  &  & V=\sum_{i=0}^{d-1}\left|i\right\rangle \!\left\langle i\right|\otimes V_{i}
\end{eqnarray}
with $\left\{ U_{i}\right\},\,\left\{ V_{i}\right\} $ being
 unitaries of size $d^{2}$. The unitaries define component-wise the
 Gram matrix $\mathbf{C}$:
\begin{align}
\mathbf{C}_{ik,jl}=\left\langle 0\right|U_{l}^{\dagger}V_{j}^{\dagger}V_{i}U_{k}\left|0\right\rangle \label{eq:Scorrematrix}
\end{align}
from (\ref{eq:Scorrematrix}) follows that $\mathbf{C}$  is the Gram matrix of the pure states $\left|\psi_{ik}\right\rangle=V_{i}U_{k}\left|0\right\rangle$ generated by matrices $V_{i},U_{k}$.

Moreover, the Gram matrix $\mathbf{C}$
provides an elegant relation between the Jamio\l kowski states $J\left(\mathcal{E}\right)$
and $J\left(\Xi_{\mathbf{C}}\left[\mathcal{E}\right]\right)$ \cite{DephSuper2021}:
\begin{align}
J\left(\Xi_{\mathbf{C}}\left[\mathcal{E}\right]\right)=J\left(\mathcal{E}\right)\odot\mathbf{C}\label{eq:jamshur}
\end{align}
with $\odot$ a Schur product. Crucially,
in reference \cite{DephSuper2021} it was shown that for any resource monotone
of coherence $\mathscr{C}$ in a preferred basis $\left\{ \left|k\right\rangle \right\} $
the corresponding coherence generating power $\mathscr{C}_{g}$ of
a channel $\mathcal{E}$ \cite{mani2015cohering}:
\begin{align}
\mathscr{C}_{g}\left(\mathcal{E}\right)=\max_{k}\left\{ \mathscr{C}\left(\mathcal{E}\left(\left|k\right\rangle \!\left\langle k\right|\right)\right)\right\}
\end{align}
is non-increasing under any dephasing superchannel $\Xi_{\mathbf{C}}$:
\begin{align}
\mathscr{C}_{g}\left(\Xi_{\mathbf{C}}\left[\mathcal{E}\right]\right)\leq\mathscr{C}_{g}\left(\mathcal{E}\right) \label{dephmon}
\end{align}
Furthermore, all $\Xi_{\mathbf{C}}$ by definition  commute with the maximally dephasing superchannel --which has a Gram matrix $\mathbf{C}=\mathbf{I}$, the identity matrix-- and in consequence from Theorem 1 of reference \cite{Saxena2020} follows that $\mathcal{I}\!\otimes\!\Xi_{\mathbf{C}}$ does not increase the coherence generation, i.e. $\Xi_{\mathbf{C}}$ is \emph{completely resource non increasing} \cite{Saxena2020}.
Because $\mathscr{C}_{g}$ is the natural monotone $\Omega$ for the resource theory of coherence generation \cite{Li2020},  property (\ref{dephmon}) makes the superchannels $\Xi_{\mathbf{C}}$ potential
free operations of a resource theory of coherence generation.

\section{Results}
\label{sec:results}

\subsection{Simulation of a dephasing superchannel from general pre- and post-processing }
\label{subsec:simulation}

This section presents our principal achievement encompassing two pivotal theorems. Together, these theorems determine the complete set of simulations (\ref{eq:superchannel}) for dephasing superchannels, spanning all possible scenarios. The first theorem unveils a direct formula for the Gram matrix $\mathbf{C}$, derived from the superoperators associated with the pre- and post-processing operations and the initial memory state. Subsequently, the second theorem serves as an indispensable complement, presenting the necessary and sufficient conditions under which the operations employed in (\ref{eq:superchannel}) can simulate a dephasing superchannel. Importantly, given the unique identification of the dephasing superchannel by matrix $\mathbf{C}$, Theorems \ref{thm:gram-pre-post} and \ref{thm:gram-pre-post2} offer versatile technical tools adaptable to address any inquiry regarding dephasing noise in gate simulations.

We start with our main result, consisting of the greatest possible generalization of the simulation introduced in Ref. \cite{DephSuper2021}, and presented in  Eq.~\eqref{eq:Scorrematrix}, where the encoding and the decoding maps are assumed to be controlled-unitary gates.
\begin{thm}
\label{thm:gram-pre-post}
    Let $\Phi^{\mathsmaller{en}}$ and $\Phi^{\mathsmaller{de}}$ denote the superoperator representations, introduced in Eq.~\eqref{eq:superoperator}, of $\N_{en}$ and $\N_{de}$. For a dephasing superchannel $\Xi_\mathbf{C}$ corresponding to the Gram matrix $\mathbf{C}$, one has
    \begin{align}
    \label{eq:gram-pre-post}
    \mathbf{C}_{\stackrel{\scriptstyle i k}{j l}}\delta_{ip}\delta_{jq}\delta_{km}\delta_{ln}=\sum_{\stackrel{\scriptstyle \theta,\gamma,\eta}{\alpha,\beta}}\Phi^{\mathsmaller{de}}_{\stackrel{\scriptstyle i\theta j\theta}{p\gamma q\eta}}\ \Phi^{\mathsmaller{en}}_{\stackrel{\scriptstyle k\gamma l\eta}{m\alpha n\beta}}\tau_{\alpha,\beta},
    \end{align}
    or in terms of the encoding and decoding maps, we have equivalently

    \begin{align}
    \mathbf{C}_{\stackrel{\scriptstyle i k}{j l}}\delta_{ip}\delta_{jq}\delta_{km}\delta_{ln}=&\sum_{\gamma,\eta}\bra{i}\Tr_2\left[\N_{de}\left(\ketbra{p}{q}\otimes\ketbra{\gamma}{\eta}\right)\right]\ket{j} \nonumber\\
    & \times \bra{k\gamma}\N_{en}\left(\ketbra{m}{n}\otimes\tau\right)\ket{l\eta}
    \end{align}
    where $\tau$ is the initial state of the memory and we used the Latin letters for the system degrees of freedom and the  Greek letters for the memory ones.
\end{thm}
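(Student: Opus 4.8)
The plan is to evaluate the matrix elements $\bra{i}\Xi_{\mathbf C}[\E](\ketbra{m}{n})\ket{j}$ in two independent ways and match the results. \emph{First}, I unpack the realization \eqref{eq:superchannel}: write $\tau=\sum_{\alpha,\beta}\tau_{\alpha,\beta}\ketbra{\alpha}{\beta}$, so the input of $\N_{en}$ is $\sum_{\alpha,\beta}\tau_{\alpha,\beta}\ketbra{m\alpha}{n\beta}$, and push it through the four stages in order. Applying $\N_{en}$ produces, on $A\otimes D$, the operator whose entries in the product basis $\{\ket{k\gamma}\}$ are $\sum_{\alpha,\beta}\Phi^{\mathsmaller{en}}_{\stackrel{\scriptstyle k\gamma l\eta}{m\alpha n\beta}}\tau_{\alpha,\beta}$; applying $\E\otimes\mathcal{I}_D$ replaces each $\ketbra{k}{l}$ on the $A$ factor by $\sum_{p,q}\Phi^{\E}_{\stackrel{\scriptstyle p q}{k l}}\ketbra{p}{q}$ (on $B$) while leaving the memory indices $\gamma,\eta$ inert; applying $\N_{de}$ acts on $B\otimes D$ in the basis $\ket{p\gamma},\ket{q\eta}$ and produces $D''$-output indices, which the terminal $\Tr_{D''}$ identifies, contributing the weight $\sum_{\theta}\Phi^{\mathsmaller{de}}_{\stackrel{\scriptstyle i\theta j\theta}{p\gamma q\eta}}$. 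Collecting the internal sums gives
\begin{align}
\bra{i}\Xi_{\mathbf C}[\E](\ketbra{m}{n})\ket{j}=\sum_{\stackrel{\scriptstyle p,q}{k,l}}\Big(\sum_{\stackrel{\scriptstyle \theta,\gamma,\eta}{\alpha,\beta}}\Phi^{\mathsmaller{de}}_{\stackrel{\scriptstyle i\theta j\theta}{p\gamma q\eta}}\,\Phi^{\mathsmaller{en}}_{\stackrel{\scriptstyle k\gamma l\eta}{m\alpha n\beta}}\,\tau_{\alpha,\beta}\Big)\,\Phi^{\E}_{\stackrel{\scriptstyle p q}{k l}}.\nonumber
\end{align}

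\emph{Second}, because $\Xi_{\mathbf C}$ is a dephasing superchannel, $\bra{i}\Xi_{\mathbf C}[\E](\ketbra{m}{n})\ket{j}=\Phi^{\Xi_{\mathbf C}[\E]}_{\stackrel{\scriptstyle i j}{m n}}$ by definition of the superoperator, and combining the reshuffling identity \eqref{eq:reshuffling} with the Schur-product rule \eqref{eq:jamshur} this equals $d\,\big(J(\E)\odot\mathbf C\big)_{\stackrel{\scriptstyle i m}{j n}}=\mathbf C_{\stackrel{\scriptstyle i m}{j n}}\,\Phi^{\E}_{\stackrel{\scriptstyle i j}{m n}}$. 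Equating this with the evaluation of the previous paragraph for every $\E$ and reading off the coefficient of each $\Phi^{\E}_{\stackrel{\scriptstyle p q}{k l}}$ yields $\sum_{\theta,\gamma,\eta,\alpha,\beta}\Phi^{\mathsmaller{de}}_{\stackrel{\scriptstyle i\theta j\theta}{p\gamma q\eta}}\Phi^{\mathsmaller{en}}_{\stackrel{\scriptstyle k\gamma l\eta}{m\alpha n\beta}}\tau_{\alpha,\beta}=\mathbf C_{\stackrel{\scriptstyle i m}{j n}}\,\delta_{ip}\delta_{jq}\delta_{km}\delta_{ln}$, and since this vanishes unless $k=m$ and $l=n$, on that support $\mathbf C_{\stackrel{\scriptstyle i m}{j n}}=\mathbf C_{\stackrel{\scriptstyle i k}{j l}}$, which is exactly \eqref{eq:gram-pre-post}. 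The step that needs care is the term-by-term identification of $\Phi^{\E}$-coefficients: it is legitimate because both \eqref{eq:superchannel} and \eqref{eq:jamshur} are linear in $\E$ and hold verbatim for an arbitrary linear map $\E:\B(A)\to\B(B)$ (the Schur rule involves neither positivity nor the trace condition), so $\Phi^{\E}$ sweeps out all of $\B(\mathbb C^{d^2})$ and its entries can indeed be matched independently.

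\emph{Finally}, the equivalent ``map form'' in the statement is obtained by re-expressing the two internal contractions: $\sum_{\theta}\Phi^{\mathsmaller{de}}_{\stackrel{\scriptstyle i\theta j\theta}{p\gamma q\eta}}=\bra{i}\Tr_2\!\big[\N_{de}(\ketbra{p}{q}\otimes\ketbra{\gamma}{\eta})\big]\ket{j}$ (the sum over $\theta$ being the partial trace over $D''$) and $\sum_{\alpha,\beta}\Phi^{\mathsmaller{en}}_{\stackrel{\scriptstyle k\gamma l\eta}{m\alpha n\beta}}\tau_{\alpha,\beta}=\bra{k\gamma}\N_{en}(\ketbra{m}{n}\otimes\tau)\ket{l\eta}$, and substituting. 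I expect the main obstacle to be essentially organizational --- keeping the four families of indices (system $p,q,k,l$; pre-memory $\alpha,\beta$; through-memory $\gamma,\eta$; post-memory $\theta$) consistently ordered as the operator is pushed through encoding, channel, decoding, and the terminal partial trace --- together with the single conceptual remark above on why an identity verified for physical $\E$ upgrades to an identity of tensors.
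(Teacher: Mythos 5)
Your proposal is correct and follows essentially the same route as the paper's Appendix~A proof: compute the superoperator of $\Xi_{\mathbf C}[\E]$ by concatenating $\Phi^{\mathsmaller{de}}$, $\Phi^{\E}\otimes$(memory identity), $\Phi^{\mathsmaller{en}}$ and the terminal partial trace, equate it with $d\,(J(\E)\odot\mathbf C)^R$ via the reshuffling and Schur-product identities, and strip off $\Phi^{\E}$ by noting that the (trace-non-increasing) quantum operations have Jamio\l kowski states spanning the full matrix space. The only cosmetic difference is that the paper packages the final coefficient-matching step as $\Tr[JB^{\stackrel{\scriptstyle pi}{qj}}]=0$ for all positive semidefinite $J$ implying $B^{\stackrel{\scriptstyle pi}{qj}}=0$, whereas you extend the identity to arbitrary linear $\E$ by linearity and read off the entries of $\Phi^{\E}$ directly --- these are equivalent.
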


We prove Theorem~\ref{thm:gram-pre-post} in Appendix~\ref{app:CBasedOnMaps}. Additionally, we emphasize for future reference that if the initial state is  classical, i.e., $\tau=\sum p_\alpha\proj\alpha$, the equation on the Gram matrix is reduced to:
    \begin{align}
    \label{eq:gram-pre-post1}
    \mathbf{C}_{\stackrel{\scriptstyle i k}{j l}}=\sum_{\stackrel{\scriptstyle\theta,\gamma,\eta}{\alpha}} p_\alpha\ \Phi^{\mathsmaller{de}}_{\stackrel{\scriptstyle i\theta j\theta}{i\gamma j\eta}}\ \Phi^{\mathsmaller{en}}_{\stackrel{\scriptstyle k\gamma l\eta}{k\alpha l\alpha}}.
    \end{align}

The general and compact nature of formulae (\ref{eq:gram-pre-post}), (\ref{eq:gram-pre-post1}) allows us to directly apply our knowledge of the operations to determine the simulable dephasing superchannels. Moreover, given the freedom of the set of operations $\N_{en}$ and $\N_{de}$, we can opt for one or more constraints to define the simulable noise (expressed through Gram matrices $\mathbf{C}$). Consequently, any noise beyond this specified set signifies a breach of the predetermined constraints. Harnessing this straightforward observation, we can utilize equations (\ref{eq:gram-pre-post}) and (\ref{eq:gram-pre-post1}) to identify essential physical factors within a noise. Furthermore, we will exemplify how to determine which noises pose greater challenges in approximation due to the physical factor encapsulated by the constraints.

The strength of Theorem \ref{thm:gram-pre-post} rests in its applicability to any set of $\N_{en}$, $\N_{de}$, operations simulating a dephasing superchannel. Because of the above, it is crucial to determine under what conditions such encoding and decoding maps simulate a dephasing superchannel.
We move in this direction by demonstrating in Appendix~\ref{app:dephasing-on-system} that a necessary condition on the encoding and decoding maps realizing a dephasing superchannel is to respect the following restrictions for any $\rho$ and $m$:
\begin{subequations}\label{eq:N-pre-post}
\begin{align}
    &\!\!\Tr_2\!\left[\N_{en}\!\left(\D_I(\rho)\!\otimes\!\tau\right)\right]\!=\!\D_I\!\left(\Tr_2\!\left[\N_{en}(\D_I(\rho)\!\otimes\!\tau)\right]\right)\!,\label{subeq:NS-pre}\\
    &\!\!\D_I\!\left(\Tr_2\!\left[\N_{de}(\rho\otimes\sigma_{m})\right]\right)\!=\!\D_I\!\left(\Tr_2\!\left[\N_{de}\left(\D_I(\rho)\otimes\sigma_{m}\right)\right]\right)\!,\label{subeq:NS-post}
\end{align}
\end{subequations}
where $\D_I$ is the completely dephasing channel \eqref{eq:dephshur} acting on the system, and
\begin{align}\label{eq:sigma}
\sigma_{m}:=\Tr_1\left[\N_{en}\left(\proj m\otimes\tau\right)\right].
\end{align}
The above shows that, on the system degrees of freedom, the pre-processing has to be a maximally incoherent operation \eqref{eq:miocomm} while the post-processing map cannot use the coherence of the input to affect the population of the output.  This, however, is not a sufficient condition. More precisely,  we prove in the following theorem that indeed they have to act like a dephasing channel on the system (Proof in Appendix~\ref{app:proof-of-T2}):
\begin{thm}
\label{thm:gram-pre-post2}
The encoding and decoding maps realize a dephasing superchannel if and only if their effect on the system degrees of freedom is dephasing. This means that for any $\rho$ and $m$
\begin{subequations}\label{eq:NS-enc-dec}
\begin{align}
   \Tr_{2}\left[\N_{en}\left(\rho\otimes\tau\right)\right]&=\rho\odot\mathrm{C}_{en},\label{subeq:enc-deph}\\
    \Tr_{2}\left[\N_{de}\left(\rho\otimes\sigma_{m}\right)\right]&=\rho\odot\mathrm{C}_{de}^{m}\label{subeq:dec-deph},
\end{align}
\end{subequations}
where $\tau$ is the initial state of the environment, and $\sigma_{m}$ is given in Eq.~\eqref{eq:sigma}. Moreover, the Gram matrices $\mathrm{C}_{en}$ and $\mathrm{C}_{de}^m$ for any $m$ are the marginals of $\mathbf{C}$, the Gram matrix of the dephasing superchannel (see Eq.~\eqref{eq:corr_form}), given by
\begin{align}\label{eq:marginal-gram}
    \mathrm{C}_{en}=C_{00},\quad\mathrm{C}_{de}^m=\sum_{ij} \left(C_{ij}\right)_{mm}\ketbra{i}{j}.
\end{align}
\end{thm}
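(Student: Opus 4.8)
The plan is to establish both directions of the equivalence and, along the way, extract the explicit form of the marginal Gram matrices. For the ``only if'' direction, I would start from Theorem~\ref{thm:gram-pre-post}, which already expresses $\mathbf{C}$ in terms of $\N_{en}$, $\N_{de}$, and $\tau$. The key observation is that the constraints~\eqref{eq:N-pre-post} proved in Appendix~\ref{app:dephasing-on-system} are necessary but, as emphasized in the text, not yet sufficient: \eqref{subeq:NS-pre} only says that $\Tr_2[\N_{en}(\D_I(\rho)\otimes\tau)]$ is diagonal whenever $\rho$ is diagonal, which is weaker than the full dephasing statement~\eqref{subeq:enc-deph}. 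So the first real step is to upgrade these to genuine dephasing actions. For the encoding map, I would argue that $\rho\mapsto\Tr_2[\N_{en}(\rho\otimes\tau)]$ is a linear CPTP map on the system; combining linearity with the population-preservation built into the dephasing-superchannel definition (via Theorem~\ref{thm:gram-pre-post}, specialized to $i=p$, $j=q$, $k=m$, $l=n$) forces its action to be entrywise multiplication by a fixed matrix $\mathrm{C}_{en}$, and positivity of the Choi state forces $\mathrm{C}_{en}$ to be a Gram matrix. The analogous argument applied to $\rho\mapsto\Tr_2[\N_{de}(\rho\otimes\sigma_m)]$, using the second constraint~\eqref{subeq:NS-post} together with the fact that the composite must reproduce the classical action~\eqref{eq:invariant-classical}, yields~\eqref{subeq:dec-deph}.

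The identification of the marginals is then a direct computation. Tracing out the decoding side in formula~\eqref{eq:gram-pre-post} — formally, setting $i=j$ (equivalently $p=q$) and summing appropriately — collapses the decoding superoperator to an identity-type factor and leaves $\mathrm{C}_{en}$ expressed through $\Phi^{\mathsmaller{en}}$ and $\tau$ alone; comparing with the block structure~\eqref{eq:corr_form} of $\mathbf{C}$ shows this equals $C_{00}$. Conversely, restricting~\eqref{eq:gram-pre-post} to the diagonal memory-independent sector on the encoding side and reading off the $(i,j)$ block with the memory label $m$ fixed recovers $\mathrm{C}_{de}^m=\sum_{ij}(C_{ij})_{mm}\ketbra{i}{j}$; the positivity of each $\mathrm{C}_{de}^m$ follows because it is a principal-submatrix-type marginal of the positive matrix $\mathbf{C}$ (indeed, it is the Gram matrix of the vectors $\{V_i U_m\ket{0}\}_i$ from Eq.~\eqref{eq:Scorrematrix} in the canonical realization). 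Here I would lean on the realization~\eqref{eq:diagram}–\eqref{eq:Scorrematrix} to make the ``Gram matrix'' claims transparent rather than re-deriving positivity by hand.

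For the ``if'' direction, assume~\eqref{eq:NS-enc-dec} holds. I would show directly from the superchannel formula~\eqref{eq:superchannel} that $\bra{i}\Xi[\E](\proj{j})\ket{i}=\bra{i}\E(\proj{j})\ket{i}$ for all $\E$. The idea is to push the input population $\proj{j}$ through $\N_{en}$: by~\eqref{subeq:enc-deph}, the system marginal coming out of $\N_{en}$ on a diagonal input is again diagonal (a diagonal matrix Schur-multiplied by $\mathrm{C}_{en}$ is diagonal, since the diagonal of a Gram matrix is all ones), so the state entering $\E\otimes\I_D$ has diagonal system part; then $\E$ acts, and finally~\eqref{subeq:dec-deph} guarantees that the output population of $\N_{de}$ depends only on the input population, hence only on the diagonal of $\E(\proj{j})$ — which is exactly the classical action of $\E$. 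Care is needed because $\E$ can correlate system and memory, so the intermediate state is not a product; the clean way to handle this is to expand in the memory basis $\{\sigma_m\}$ (or its supporting basis) as in the definition of $\sigma_m$ in Eq.~\eqref{eq:sigma}, reducing to the per-branch statements~\eqref{subeq:enc-deph}–\eqref{subeq:dec-deph}.

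\textbf{Main obstacle.} I expect the delicate point to be the upgrade from the weak constraints~\eqref{eq:N-pre-post} to the full dephasing form~\eqref{eq:NS-enc-dec} — in particular showing the \emph{decoding} map acts as a fixed Schur multiplier on the system for each memory branch, since a priori~\eqref{subeq:NS-post} only controls how coherences feed into populations, not how coherences map to coherences. Closing that gap requires invoking the defining property for \emph{all} channels $\E$ (not just a fixed one) and exploiting that the composite is a superchannel, so that the system-level coherence-to-coherence response of $\N_{de}$ cannot depend on $\E$; quantifying this precisely, most naturally via Theorem~\ref{thm:gram-pre-post} and the block form~\eqref{eq:corr_form}, is where the argument carries its weight.
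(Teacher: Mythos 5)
Your plan follows the paper's route almost exactly: both directions are driven by Theorem~\ref{thm:gram-pre-post}, the marginals are extracted by specializing indices in Eq.~\eqref{eq:gram-pre-post} (setting $i=j$, $p=q$ and summing, which collapses the decoder by trace preservation, for $\mathrm{C}_{en}=C_{00}$; setting $k=l=m=n$ for $\mathrm{C}_{de}^m$), linearity upgrades the basis-element identities to \eqref{eq:NS-enc-dec}, and sufficiency is a direct branch-wise computation. The one ingredient you are missing --- and it is precisely what dissolves the ``main obstacle'' you flag at the end --- is the observation that $\proj{m}\odot\mathrm{C}_{en}=\proj{m}$ (a Gram matrix has unit diagonal), so the system marginal of $\N_{en}\left(\proj{m}\otimes\tau\right)$ is pure, and a bipartite state with a pure marginal must be a product: $\N_{en}\left(\proj{m}\otimes\tau\right)=\proj{m}\otimes\sigma_m$. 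This does two jobs at once. In the necessity direction it lets you substitute $k=l=m=n$ into Eq.~\eqref{eq:gram-pre-post} and collapse the encoder factor to $\bra{m\gamma}\left(\proj{m}\otimes\sigma_{m}\right)\ket{m\eta}$, so that $\bra{i}\Tr_2\left[\N_{de}\left(\ketbra{p}{q}\otimes\sigma_{m}\right)\right]\ket{j}=\mathbf{C}_{\stackrel{\scriptstyle im}{jm}}\delta_{ip}\delta_{jq}$ drops out immediately --- the coherence-to-coherence response of the decoder is pinned down by the formula itself, with no need to quantify over all $\E$ or to ``upgrade'' the weak constraints \eqref{eq:N-pre-post} (the paper's proof never invokes them). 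In the sufficiency direction it guarantees that on each branch the state entering $\E\otimes\I_{D}$ is the product $\proj{m}\otimes\sigma_m$, hence the output is $\E\left(\proj{m}\right)\otimes\sigma_m$ and \eqref{subeq:dec-deph} applies verbatim per branch (finishing with $\D_I\circ\D_{\mathrm{C}_{de}^m}=\D_I$); your worry that ``$\E$ can correlate system and memory'' never materializes because $\E$ acts trivially on the memory and its input is already product. With that observation inserted, your outline closes.
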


The conditions presented in Theorem \ref{thm:gram-pre-post2} complement Theorem \ref{thm:gram-pre-post} since they provide the intrinsic limits in the freedom of the simulation. Furthermore, conditions (\ref{subeq:enc-deph}), (\ref{subeq:dec-deph}), and (\ref{eq:marginal-gram}) make explicit the marginal action of the encoding and decoding maps over the system, using the terms from the Gram matrix  $\mathbf{C}$ directly. The above characterization of the marginal actions over the system contributes in a clear way to fix the parameters of operations $\N_{en}$, $\N_{de}$ from the parameters defining the dephasing superchannel. Indeed, the previous feature of Theorem \ref{thm:gram-pre-post2} will be crucial in transforming intricate general simulation problems into tractable ones.

\subsection{Quantifying the required active memory}
\label{subsec:memory}

A memory device's fundamental faculty is storing information to influence future action. The simplest way to use a memory device to implement a quantum process is by coding a program that determines the operations over the system through correlations. Such a capacity of a memory device is an essential resource in quantum processes \cite{Rosset2018,Buscemi2020,XiaoRegula2021}. However, the quantum interactions of a system with memory can generate effects beyond programming. For example, quantum processes intended for computation can drastically reduce the size of the system used by including a quantum memory, allowing a more scalable computational architecture \cite{Gouzien2021,Zhang2022}.

As the previous applications demand a quantum memory that dynamically adapts to the input of the process, it becomes relevant to distinguish those processes that require such kind of memory from those achievable through predefined programming. The latter kind of memory is characterized by operating without storing information from the input state, and we designate them as \emph{passive} memories, while the former, in contrast, we call \emph{active} memories.
The previous concepts can be formalized in the following way:

\begin{defn}
\label{defn:passive}
    Whenever a superchannel implemented as in (\ref{eq:superchannel}), has a memory:
    \begin{align}
    \label{eq:passive}
        \sigma:=\Tr_1\left[\N_{en}\left(\varrho\otimes\tau\right)\right],
    \end{align}
    such that $\sigma$ is always independent of the system's input $\varrho$, we say the superchannel's memory is \emph{passive} and otherwise \emph{active}.
    \end{defn}

Notably, section III.C of Ref. \cite{DephSuper2021} presents an example establishing the phenomenological novelty of dephasing superchannels as it shows that some are irreducible to noise models based solely on dephasing channels. In this section, we go one step further, exploiting Theorems \ref{thm:gram-pre-post} and \ref{thm:gram-pre-post2} to show the  essential role of active memory in the new phenomenology and to assess it quantitatively.

More precisely, for a passive memory, the state $\sigma_{m}$ in Eq.~\eqref{eq:sigma} is independent of $m$, which implies the right-hand-side of Eq.~\eqref{subeq:dec-deph} should also be $m$-independent.  Additionally, through Eq.~\eqref{eq:marginal-gram}, follows that:
\begin{align}
\label{eq:diagonal-entries}
   \mathbf{C}_{\stackrel{\scriptstyle i m}{j m}}= (C_{ij})_{mm},
\end{align}
does not depend on $m$. Thus, the diagonal entries of the block $ij$ of the correlation matrix $\mathbf{C}$ corresponding to the dephasing superchannel, given by Eq.~\eqref{eq:corr_form},  are fixed.    The inverse of this is not necessarily true. To see this, notice that it is possible to have the left-hand side of Eq.~\eqref{subeq:dec-deph} $m$-dependent, therefore having an active memory, yet map it (by a semi-causal channel \cite{Beckman-Causality} for example) to an $m$-independent state on the right-hand side of this equation. However, in the following subsection, we will show for dephasing superchannnels acting on qubit maps the two sets coincide. Furthermore, we emphasize the assumption that $\sigma_{m}=\sigma$ is independent of $m$ does not mean that the system evolution is independent of its environment, see Appendix~\ref{app:off-diagonal} for more details.


 Being $m$-independent, Eq.\eqref{eq:diagonal-entries} demonstrates that the superchannel realized by a passive memory forms only a measure zero subset of the entire set of dephasing superchannels. Therefore, one can associate the dephasing superchannels achievable by a passive memory with a limited set of Gram matrices and perform the quantitative evaluation of an active memory using a distance measure from that set:
\begin{defn}
\label{defn:measure}
    A quantification of the amount of active memory $M(\Xi_{\fc})$ needed to realize a dephasing superchannel, isomorphic to the Gram matrix $\fc$, is:
    \begin{align}
    \label{eq:measure}
        M(\Xi_{\fc})=\min_{\fc_\star\in\mathcal{S}}D(\fc,\fc_\star),
    \end{align}
    where $D$ is any pseudo distance \cite{bengtsson2017,GourWinter2019} and the optimization is over $\mathcal{S}$, the set of all correlation matrices obtained by a passive memory.
    \end{defn}

    Later, we will exploit the above definition to identify promising dephasing superchannels for experimentally testing non-trivial noises on quantum gates.

\subsection{Dephasing superchannels acting on qubit channels}
\label{subsec:qubit}
The point of this section is to study the active memory needed in the realization of a dephasing superchannel acting on qubit channels. For qubit channels, we first show that the set of dephasing superchannels achievable through passive memory follows a specific pattern: $\fc=\sum_i q_{i} \mathrm{C}^{(i)}_1\otimes \mathrm{C}^{(i)}_2$. This can be understood as the application of $\D_{C_1^{(i)}}\!\circ\E\circ\D_{C_2^{(i)}}$ with certain classical probabilities $q_i$, as explained in \cite{DephSuper2021}, rather than using more complex higher-dimensional encoding-decoding maps. We later show for a generic dephasing superchannel the amount of active memory is quantified by the difference between two diagonal elements of the off-diagonal block $C_{01}$.

To convey the central result of this section, we adopt the following notation to represent the  $l_1$ distance of two matrices $G$ and $H$:
\begin{equation}\label{eq:l-1-dis}
    D_{l_1}(G,H)=\sum_{i,j}|G_{ij}-H_{ij}|.
\end{equation}
Then, we will apply the following result from Ref.~\cite{DephSuper2021}:
\begin{lem}
\label{lem:gram-rank-2}
    Any $4\times4$ Gram matrix $\fc$ corresponding to a dephasing superchannel $\Xi_\fc$, which acts on qubit channels, is proportional to a two-qubit separable state.
\end{lem}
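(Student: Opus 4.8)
\emph{Proof sketch.} The plan is to show that any such $\mathbf{C}$ has positive partial transpose and then invoke the Peres--Horodecki criterion, which in dimension $2\times2$ is an equivalence with separability \cite{bengtsson2017}. Being a Gram matrix, $\mathbf{C}$ is positive semidefinite with unit diagonal, so $\Tr\mathbf{C}=4$ and $\mathbf{C}/4$ is a bona fide two-qubit density operator; it thus suffices to prove that $\mathbf{C}$ is PPT.

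The only structural input is the block form \eqref{eq:corr_form}. Writing $\mathbf{C}$ as a $2\times2$ array of $2\times2$ blocks labelled by the first (``control'') qubit, the two diagonal blocks coincide, $C_{00}=C_{11}$, while Hermiticity gives $C_{10}=C_{01}^{\dagger}$ for the off-diagonal ones. Expanded in the Pauli basis of the control qubit this says precisely that the $\sigma_z$-component is absent:
\begin{align}
\mathbf{C}=\iden_2\otimes C_{00}+\sigma_x\otimes H_1+\sigma_y\otimes H_2 ,
\end{align}
with $H_1=\tfrac12(C_{01}+C_{01}^{\dagger})$ and $H_2=\tfrac{i}{2}(C_{01}-C_{01}^{\dagger})$ Hermitian.

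The key step is then immediate: partial transposition on the control qubit flips the sign of the $\sigma_y$-component only, whereas conjugation by the bit flip $\sigma_x\otimes\iden_2$ flips the signs of both the $\sigma_y$- and the $\sigma_z$-components and leaves $\iden_2$ and $\sigma_x$ fixed; since the $\sigma_z$-component vanishes the two operations agree, so $\mathbf{C}^{T_1}=(\sigma_x\otimes\iden_2)\,\mathbf{C}\,(\sigma_x\otimes\iden_2)\ge0$. (Equivalently, a one-line index check shows both sides have $(i,j)$-block $C_{ji}$, using $C_{11}=C_{00}$.) Hence $\mathbf{C}$ is PPT, and by the Horodecki criterion for $2\times2$ systems $\mathbf{C}/4$ is separable, i.e.\ $\mathbf{C}$ is proportional to a two-qubit separable state.

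I do not expect a real obstacle: the argument is entirely in noticing that the equality of the diagonal blocks in \eqref{eq:corr_form} converts the partial transpose into a local unitary conjugation, after which ``PPT $\Leftrightarrow$ separable'' closes the proof. The one caveat worth recording is that this is special to $d=2$: for $d\ge3$ no single permutation of the block index can implement the partial transpose, consistent with the lemma being restricted to qubit channels (and with PPT ceasing to imply separability beyond the $2\times2$ and $2\times3$ cases).
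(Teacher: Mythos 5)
Your proof is correct, and there is nothing in the paper to compare it against: Lemma~\ref{lem:gram-rank-2} is imported verbatim from Ref.~\cite{DephSuper2021} with no in-text argument, so you have supplied a derivation the manuscript itself omits. Your route is sound and economical. The only structural input you need is the equality of the diagonal blocks in \eqref{eq:corr_form}, which for $d=2$ reads $C_{11}=C_{00}$ and therefore kills the $\sigma_z\otimes(\cdot)$ term in the Pauli expansion of the control factor; since transposition and conjugation by $\sigma_x$ agree on $\{I,\sigma_x,\sigma_y\}$ and differ only on $\sigma_z$, the partial transpose $\mathbf{C}^{T_1}$ is a local unitary conjugate of $\mathbf{C}$ and hence positive, and the Peres--Horodecki criterion in dimension $2\times2$ upgrades PPT to separability. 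The normalization step ($\Tr\mathbf{C}=4$ from the unit diagonal of a Gram matrix) is also handled correctly. Your closing caveat is exactly right and worth keeping: the argument is intrinsically two-dimensional, which is consistent with the paper's qutrit family \eqref{eq:family}, where the Gram matrices are NPT (hence entangled) for $|\alpha|^2+|\beta|^2>1$, so no analogue of the lemma can hold for $d\ge3$.
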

Furthermore, we exploit the fact that any trace-preserving linear transformation, denoted by  $\L$,  acting on quantum states corresponds to an affine transformation applied to their Bloch vectors. This correspondence is established by
\begin{equation}
    \!\!\rho=\frac{1}{2}\left(I+\mathbf{r}.\boldsymbol{\sigma}\right)\!\!\quad{\stackrel{\scriptstyle \L}{\longrightarrow}}\quad\!\! \L(\rho)=\frac{1}{2}\Big(I+(\Lambda\mathbf{r}+\mathbf{t}).\boldsymbol{\sigma}\Big),
\end{equation}
where $\Lambda$ is a $3\times 3$ distortion matrix and $\mathbf{t}$ is a translation vector in $\mathbb{R}^3$ showing how $\L$ shifts $I$. If $\L$ is in addition positive, then $\Lambda$ has to be a contraction, i.e., $\Lambda^\dagger \Lambda\leq I$. In particular,
\begin{equation}
\label{eq:positive-matrix}
    \Lambda_\star=
    \begin{pmatrix}
       1&0&0\\
       0&1&0\\
       0&0&0
    \end{pmatrix},\quad
    \mathbf{t}_\star=
    \begin{pmatrix}
        0\\0\\0
    \end{pmatrix},
\end{equation}
define the positive linear transformation $\L_\star$ which is not completely positive and projects any point in the Bloch sphere into $x-y$ plane. Now, we bring one of the main results of this section.

\begin{prop}
\label{prop:qubit-passive-memry}
Let  $\Xi_{\fc}$ be a dephasing superchannel acting on qubit channels with $\fc$ being its associated $4\times4$ Gram matrix, i.e.,
\begin{align}
\label{eq:4dim-gram-block}
    \fc=\left(\begin{array}{@{}cc@{}}
  \begin{matrix}
     C_{00}
  \end{matrix}
  & C_{01} \\
   C_{01}^\dagger &
  \begin{matrix}
  C_{00}
  \end{matrix}
\end{array}\right).
\end{align}
Then, the following are equivalent:
\begin{enumerate}[label=(\roman*)]
    \item $\Xi_\fc$ can be realized employing some passive memory.
    \item $\fc=\sum q_{i} \mathrm{C}^{(i)}_1\otimes \mathrm{C}^{(i)}_2$.
    \item $C_{01}$ has a fixed diagonal entry.
\end{enumerate}
\end{prop}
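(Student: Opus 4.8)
The plan is to prove the three implications $(i)\Rightarrow(iii)$, $(iii)\Rightarrow(ii)$, and $(ii)\Rightarrow(i)$, closing the cycle. The implication $(i)\Rightarrow(iii)$ is essentially already contained in the discussion preceding the proposition: if $\Xi_{\fc}$ admits a passive-memory realization then $\sigma_m$ of Eq.~\eqref{eq:sigma} is independent of $m$, so by Theorem~\ref{thm:gram-pre-post2} the marginal decoding Gram matrix $\mathrm{C}_{de}^{m}=\sum_{ij}(C_{ij})_{mm}\ketbra{i}{j}$ of Eq.~\eqref{eq:marginal-gram} is $m$-independent; for qubits the $(i,j)=(0,1)$ entry of this identity reads $(C_{01})_{00}=(C_{01})_{11}$, which is exactly $(iii)$ (equivalently, one quotes Eq.~\eqref{eq:diagonal-entries}).

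The heart of the argument is $(iii)\Rightarrow(ii)$, and the first thing I would do is recast $(iii)$ as a statement about the Pauli expansion $\fc=\sum_{\mu,\nu=0}^{3}t_{\mu\nu}\,\sigma_\mu\otimes\sigma_\nu$ (with $\sigma_0=I$), reading the first tensor factor as the output register and the second as the input register. Because $\fc$ has the dephasing-superchannel block form \eqref{eq:corr_form}/\eqref{eq:4dim-gram-block} with both diagonal blocks equal to $C_{00}$ and $C_{00}$ a Gram matrix, all coefficients with $\sigma_z$ on the output register vanish identically ($t_{3\nu}=0$) and so does $t_{03}$; a one-line computation from the blocks then gives $t_{13}\propto\Re\big((C_{01})_{00}-(C_{01})_{11}\big)$ and $t_{23}\propto\Im\big((C_{01})_{00}-(C_{01})_{11}\big)$. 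Hence $(iii)$ is equivalent to $\fc$ carrying no Pauli component containing $\sigma_z$ on either factor, i.e.\ to $\fc$ being a fixed point of $\L_\star\otimes\L_\star$, where $\L_\star$ is the positive map of Eq.~\eqref{eq:positive-matrix} (it acts on single-qubit Paulis by $I,\sigma_x,\sigma_y\mapsto I,\sigma_x,\sigma_y$ and $\sigma_z\mapsto 0$). Then I invoke Lemma~\ref{lem:gram-rank-2}: $\fc/4$ is a separable two-qubit state, $\fc/4=\sum_i p_i\,\rho_1^{(i)}\otimes\rho_2^{(i)}$. Applying the fixed-point property and using that $\L_\star$ is positive and trace preserving,
\[
\fc/4=(\L_\star\otimes\L_\star)(\fc/4)=\sum_i p_i\,\L_\star(\rho_1^{(i)})\otimes\L_\star(\rho_2^{(i)}),
\]
and each $\L_\star(\rho_j^{(i)})$ is a genuine qubit state with Bloch vector in the $xy$-plane, so $2\L_\star(\rho_j^{(i)})$ is a $2\times2$ Gram matrix. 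Setting $q_i=p_i$, $\mathrm{C}_1^{(i)}=2\L_\star(\rho_1^{(i)})$, $\mathrm{C}_2^{(i)}=2\L_\star(\rho_2^{(i)})$ yields $\fc=\sum_i q_i\,\mathrm{C}_1^{(i)}\otimes\mathrm{C}_2^{(i)}$, which is $(ii)$.

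For $(ii)\Rightarrow(i)$ I would use that pre- and post-composing a channel with dephasing channels acts on the Jamio\l kowski state by a Schur product, $J(\D_{\mathrm{C}_1}\circ\E\circ\D_{\mathrm{C}_2})=J(\E)\odot(\mathrm{C}_1\otimes\mathrm{C}_2)$ (as used in \cite{DephSuper2021}); combined with Eq.~\eqref{eq:jamshur}, the decomposition $(ii)$ identifies $\Xi_{\fc}$ with $\E\mapsto\sum_i q_i\,\D_{\mathrm{C}_1^{(i)}}\circ\E\circ\D_{\mathrm{C}_2^{(i)}}$. This is realized in the form \eqref{eq:superchannel} with a passive memory: take the memory in the fixed classical state $\tau=\sum_i q_i\proj{i}$ (independent of the input), let $\N_{en}$ apply the pre-dephasing $\D_{\mathrm{C}_2^{(i)}}$ controlled on $i$ using an ancilla that is traced out \emph{inside} $\N_{en}$ while forwarding $i$ to the memory, and let $\N_{de}$ read $i$, apply $\D_{\mathrm{C}_1^{(i)}}$ (again with an internally discarded ancilla), and discard the memory. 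Since the ancillas carrying the dephasing environments are not stored in the memory, $\Tr_1[\N_{en}(\varrho\otimes\tau)]=\sum_i q_i\proj{i}$ regardless of $\varrho$, so the memory is passive per Definition~\ref{defn:passive}.

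The hard part is the first half of $(iii)\Rightarrow(ii)$: establishing rigorously that condition $(iii)$ is precisely the absence of any $\sigma_z$-Pauli component in $\fc$, which demands carefully matching the block structure \eqref{eq:corr_form} to the $t_{\mu\nu}$ and keeping the output/input register conventions straight. Once that identification is in place, the projection $\L_\star\otimes\L_\star$ together with Lemma~\ref{lem:gram-rank-2} finishes the argument with no real difficulty, the only point worth double-checking being that positivity of $\L_\star$ keeps the projected local operators valid states; the other two implications are short modulo the routine verification of the Schur-product identity invoked in $(ii)\Rightarrow(i)$.
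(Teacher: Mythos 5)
Your proof is correct and follows essentially the same route as the paper: the same cycle of implications, the same key ingredients (Lemma~\ref{lem:gram-rank-2} plus invariance of $\fc$ under $\L_\star\otimes\L_\star$), and the same conclusion that the projected separable decomposition yields products of $2\times2$ Gram matrices. The only difference is cosmetic bookkeeping — you establish the $\L_\star\otimes\L_\star$ invariance via the Pauli coefficients of $\fc$ directly (which all check out), whereas the paper derives it from moment conditions on the Bloch vectors of the separable decomposition — and your $(ii)\Rightarrow(i)$ construction is merely a more explicit version of the paper's one-line classical-memory argument.
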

\begin{proof}
The fact that $(ii)$ is achievable by applying dephasing channels, according to some classical memory, before and after $\E$ shows that $(ii)$ gives $(i)$. It is also clear that $(i)$ implies $(iii)$ as discussed in the last subsection. To complete the proof, it remains to show $(iii)$ results in $(ii)$. In that order, we will apply Lemma~\ref{lem:gram-rank-2}  to show any $4\times4$ Gram matrix of the above form with fixed diagonal entries of $C_{01}$ can be written as $\sum q_{i} \mathrm{C}^{(i)}_1\otimes \mathrm{C}^{(i)}_2$.

In that order, we start by noticing that since $\fc$ is separable  by Lemma~\ref{lem:gram-rank-2}, it admits the following decomposition
\begin{equation}
\label{eq:sep-dec}
    \fc=4\sum p_i\proj{\psi_i}\otimes\proj{\phi_i}.
\end{equation}
Let $\mathbf{r}^i=(r_1^i,r_2^i,r_3^i)^\mathrm{T}$ and $\mathbf{s}^i=(s_1^i,s_2^i,s_3^i)^\mathrm{T}$ respectively denote the Bloch vectors of $\proj{\psi_i}$ and $\proj{\phi_i}$, i.e.,
\begin{equation}
\label{eq:Bloch}
    \proj{\psi_i}=\frac{1}{2}\left(I+\mathbf{r}^i.\boldsymbol{\sigma}\right),\quad \proj{\phi_i}=\frac{1}{2}\left(I+\mathbf{s}^i.\boldsymbol{\sigma}\right).
\end{equation}
Now, inserting Eq.~\eqref{eq:Bloch} in Eq.~\eqref{eq:sep-dec},  it is straightforward to see it gives a Gram matrix of the form~\eqref{eq:4dim-gram-block} if and only if all the following conditions hold
\begin{subequations}\label{eq:N-on-dec}
\begin{align}
    &\sum p_i r^i_3=0\\
    &\sum p_i s^i_3=0\\
    &\sum p_i r^i_3 s^i_j=0 \quad\forall j\in\{1,2,3\}.
\end{align}
\end{subequations}

Next, applying the condition that the diagonal element of $C_{01}$ is fixed, it must hold that
\begin{align}\label{eq:S-on-dec}
    \sum p_i r^i_j s^i_3=0 \quad\forall j\in\{1,2\}.
\end{align}

As a result of Eqs.~\eqref{eq:N-on-dec} together with Eq.~\eqref{eq:S-on-dec}, the $z$ components of the Bloch vectors $\mathbf{r}_i$ and $\mathbf{s}_i$ vanish from the Gram matrix $\fc$ in Eq.~\eqref{eq:4dim-gram-block}. Therefore, $\fc$ remains invariant under the action of the map $\L_\star\otimes\L_\star$, where affine representation of $\L_\star$ is given by Eq.~\eqref{eq:positive-matrix}, i.e.,
\begin{equation}
    \fc=4\sum p_i\L_\star(\proj{\psi_i})\otimes\L_\star(\proj{\phi_i}).
\end{equation}
Noting that $\L_\star$ maps all the qubit states to the $x-y$ plane and the fact that any state in this plane is proportional to a $2\times2$ Gram matrix completes the proof.
\end{proof}
The above proposition proves that the noise imposed by a passive memory in the qubit case is realizable by applying some dephasing channels as pre- and post-processing chosen according to some classical memory. It remains open to prove if the same holds in higher dimensions as well, i.e., whether a passive memory in any dimension is equivalent to a classical memory.
Furthermore, in the qubit case where $\fc$ is separable, the presence of a fixed diagonal within the off-diagonal block suggests the possibility of realization through a passive memory. Examining whether separability, combined with these fixed entries, implies the existence of a passive memory in higher dimensions, is an avenue for future research.

On the other hand, the difference between the diagonal elements of $C_{01}$, the off-diagonal block of a $4\times4$ Gram matrix $\fc$, can show the amount of active memory required in the realization of a dephasing superchannel acting on qubit channels. In what follows, we prove this intuition. More precisely, we show, by assuming the $l_1$ distance \eqref{eq:l-1-dis}, the modulus of this difference between these two elements quantifies the memory through Eq.~\eqref{eq:measure}.
\begin{thm}
\label{thm:qubit-memory}
    The minimum amount, quantified by $l_1$ distance, of the active memory needed in the realization of a dephasing superchannel $\Xi_\fc$, which acts on qubit channels and is isomorphic to the Gram matrix $\fc$,  is given by
    \begin{equation}\label{eq:qubit-memory}
        M(\Xi_\fc)=\min_{\fc_\star\in S}D_{l_1}(\fc,\fc_\star)=2|\fc_{\stackrel{\scriptstyle 00}{10}}-\fc_{\stackrel{\scriptstyle 01}{11}}|,
    \end{equation}
    where $S$ is the set of Gram matrices realizable without an active memory.
\end{thm}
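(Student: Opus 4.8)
The plan is to prove \eqref{eq:qubit-memory} by two matching bounds, using Proposition~\ref{prop:qubit-passive-memry} to describe $S$ explicitly. By part (iii) of that proposition, a $4\times4$ Gram matrix $\fc_\star$ lies in $S$ precisely when it has the block form \eqref{eq:4dim-gram-block} and its off-diagonal block $C_{01}^\star$ has equal diagonal entries, $(C_{01}^\star)_{00}=(C_{01}^\star)_{11}$. In the block notation one has $\fc_{\stackrel{\scriptstyle 00}{10}}=(C_{01})_{00}$ and $\fc_{\stackrel{\scriptstyle 01}{11}}=(C_{01})_{11}$, so the right-hand side of \eqref{eq:qubit-memory} is twice the spread of the diagonal of $C_{01}$, and membership in $S$ is exactly the statement that this spread vanishes.

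First I would prove $M(\Xi_\fc)\ge 2|\fc_{\stackrel{\scriptstyle 00}{10}}-\fc_{\stackrel{\scriptstyle 01}{11}}|$. Fix any $\fc_\star\in S$ and set $y:=(C_{01}^\star)_{00}=(C_{01}^\star)_{11}$. Since $D_{l_1}$ sums entrywise moduli over all entries, it dominates the contribution of the four distinct positions carrying $(C_{01})_{00}$, $(C_{01})_{11}$ and their Hermitian conjugates in $C_{10}$. The triangle inequality gives $|(C_{01})_{00}-y|+|(C_{01})_{11}-y|\ge|(C_{01})_{00}-(C_{01})_{11}|$, and likewise for the conjugated pair, so $D_{l_1}(\fc,\fc_\star)\ge 2|(C_{01})_{00}-(C_{01})_{11}|$. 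This step is elementary and uses only the defining property of $S$.

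Next I would exhibit a minimiser attaining this value. Let $\L_\star$ be the positive (not completely positive) qubit map of Eq.~\eqref{eq:positive-matrix}; as a linear map on $2\times2$ matrices it keeps the off-diagonal entries and replaces both diagonal entries by their arithmetic mean. Define $\fc_\star:=(\mathcal{I}\otimes\L_\star)[\fc]$, i.e.\ apply $\L_\star$ to the memory tensor factor, equivalently to each $2\times2$ block of \eqref{eq:4dim-gram-block} internally. Then: (a) $\L_\star$ fixes $C_{00}$, whose diagonal is already $(1,1)$, so the diagonal blocks of $\fc_\star$ remain $C_{00}$ and $\fc_\star$ keeps unit diagonal; (b) the off-diagonal block of $\fc_\star$ is $\L_\star(C_{01})$, whose two diagonal entries equal $\frac{1}{2}\big((C_{01})_{00}+(C_{01})_{11}\big)$, while $\L_\star$ leaves all other entries of the block untouched; (c) by Lemma~\ref{lem:gram-rank-2}, $\fc$ is proportional to a separable two-qubit state, so applying the positive map $\L_\star$ to one factor preserves positivity, whence $\fc_\star\ge0$. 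By (a)--(c), $\fc_\star$ is a Hermitian PSD matrix of the form \eqref{eq:corr_form}, hence corresponds to a dephasing superchannel, and by Proposition~\ref{prop:qubit-passive-memry} it belongs to $S$. Finally $\fc$ and $\fc_\star$ differ only in $(C_{01})_{00},(C_{01})_{11}$ and their conjugates, each displaced by $\frac{1}{2}|(C_{01})_{00}-(C_{01})_{11}|$, so $D_{l_1}(\fc,\fc_\star)=2|(C_{01})_{00}-(C_{01})_{11}|$, which combined with the lower bound gives \eqref{eq:qubit-memory}.

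The main obstacle is step (c): $\L_\star$ is positive but not completely positive, so $(\mathcal{I}\otimes\L_\star)[\fc]\ge0$ is not automatic and hinges on the separability of $\fc$ granted by Lemma~\ref{lem:gram-rank-2} — this is precisely where the restriction to qubit channels enters, and it is why the analogous construction is not known to work in higher dimensions. A second, milder point to check is that no other $\fc_\star\in S$ can beat the bound by spreading the discrepancy over additional entries; the lower-bound argument already excludes this, since it only inspects the four positions tied to the diagonal of $C_{01}$. Assembling the two bounds also identifies $(\mathcal{I}\otimes\L_\star)[\fc]$ as an explicit optimal passive-memory approximation of $\Xi_\fc$.
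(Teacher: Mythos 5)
Your proposal is correct and follows essentially the same route as the paper's proof: the lower bound comes from restricting the $l_1$ sum to the two diagonal entries of $C_{01}$ and their Hermitian conjugates and applying the triangle inequality with the passive-memory constraint $(C_{01}^\star)_{00}=(C_{01}^\star)_{11}$, and the matching upper bound is attained by exactly the same minimiser $\fc_\star=(\mathcal{I}\otimes\L_\star)(\fc)$, whose positivity is guaranteed by the separability of $\fc$ from Lemma~\ref{lem:gram-rank-2}. Your write-up is in fact slightly more explicit than the paper's on the entrywise action of $\L_\star$ and on verifying that $\fc_\star$ retains the required block form, but the argument is the same.
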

\begin{proof}
    To prove, assume that $\min$ in Eq.~\eqref{eq:qubit-memory} is obtained by the Gram matrix $\fc'$. Thus, we get
    \begin{align}\label{eq:lower-bound}
        M(\Xi_\fc)&=D_{l_1}(\fc,\fc')=\sum_{i,j,k,l}|\fc_{\stackrel{\scriptstyle ik}{jl}}-\fc'_{\stackrel{\scriptstyle ik}{jl}}|\nonumber\\
        &\geq2\left(|\fc_{\stackrel{\scriptstyle 00}{10}}-\fc'_{\stackrel{\scriptstyle 00}{10}}|+|\fc_{\stackrel{\scriptstyle 01}{11}}-\fc'_{\stackrel{\scriptstyle 01}{11}}|\right)\nonumber\\
        &=2\left(|\fc_{\stackrel{\scriptstyle 00}{10}}-\fc'_{\stackrel{\scriptstyle 00}{10}}|+|\fc_{\stackrel{\scriptstyle 01}{11}}-\fc'_{\stackrel{\scriptstyle 00}{10}}|\right)\nonumber\\
        &\geq2|\fc_{\stackrel{\scriptstyle 00}{10}}-\fc_{\stackrel{\scriptstyle 01}{11}}|,
    \end{align}
    where the first inequality holds due to the Hermiticity of a Gram matrix and because of taking $4$ out of $16$ non-negative elements of the summation. The following equality is obtained by the fact that $\fc'$ has a passive memory, and thus has the same diagonal element on its off-diagonal block. Finally, the last inequality is due to the triangle inequality.

    Having the lower bound~\eqref{eq:lower-bound}, to complete the proof of the theorem, it is enough to show that there exists a Gram matrix $\fc'\in S$ that satisfies this bound with equality. It is easy to show such a $\fc'$ can be obtained by applying $\L_\star$ \eqref{eq:positive-matrix} locally on $\fc$
    \begin{equation}
        \fc'=(\I\otimes\L_\star)(\fc).
    \end{equation}
    The facts  that $\L_\star$ is a positive map and $\fc$ is separable imply that the above gives a positive matrix, consequently, a valid Gram matrix in $S$ and this completes the proof.
\end{proof}

From the above result, we can find simple realizations of dephasing superchannels maximally departing from those simulable with dephasing channels and passive memory. An example of such a Gram matrix is given by \begin{align}\label{eq:maxEx1}
	\fc_{max}=\begin{pmatrix}
			1&0&1&0\\
			0&1&0&0\\
			1&0&1&0\\
			0&0&0&1
		\end{pmatrix}.
\end{align}

To understand the previous case in more detail, we can study its action in the Bloch sphere. Writing the affine transformation of a general qubit quantum channel $\L$ as
\begin{equation}
\label{eq:positive-matrix-general}
    \Lambda=
    \begin{pmatrix}
       u_{1} & v_{1} & w_{1}\\
 u_{2} & v_{2} & w_{2}\\
 u_{3} & v_{3} & w_{3}
    \end{pmatrix},\quad
    \mathbf{t}=
    \begin{pmatrix}
        t_1\\t_2\\t_3
    \end{pmatrix},
\end{equation}
then, exploiting definition (\ref{eq:choi}), and the identity \cite{Bertlmann2008,Gamel2016}:

\begin{equation}
\left|\Psi\right\rangle \!\left\langle \Psi\right|=\frac{1}{4}(I\otimes I+\sigma_{1}\otimes\sigma_{1}-\sigma_{2}\otimes\sigma_{2}+\sigma_{3}\otimes\sigma_{3}),
\end{equation}
by applying $\L$ to the first system we obtain
the corresponding Jamio\l kowski state $J(\L)$:
\begin{widetext}

\begin{equation}
\label{eq:jam-general}
J(\L)=\frac{1}{4}\left[\begin{array}{cccc}
1+t_{3}+w_{3} & u_{3}+iv_{3} & t_{1}-it_{2}+w_{1}-iw_{2} & u_{1}-iu_{2}+iv_{1}+v_{2}\\
u_{3}-iv_{3} & 1+t_{3}-w_{3} & u_{1}-iu_{2}-iv_{1}-v_{2} & t_{1}-it_{2}-w_{1}+iw_{2}\\
t_{1}+it_{2}+w_{1}+iw_{2} & u_{1}+iu_{2}+iv_{1}-v_{2} & 1-t_{3}-w_{3} & -u_{3}-iv_{3}\\
u_{1}+iu_{2}-iv_{1}+v_{2} & t_{1}+it_{2}-w_{1}-iw_{2} & -u_{3}+iv_{3} & 1-t_{3}+w_{3}
\end{array}\right],
\end{equation}
\end{widetext}
Then, from identity (\ref{eq:jamshur}) follows that the action of (\ref{eq:maxEx1}) is to turn any map (\ref{eq:positive-matrix-general}) into:
\begin{equation}
\label{eq:prime-general}
    \Lambda^{\prime}=
    \begin{pmatrix}
       0 & 0 & (w_{1}+t_1)/2\\
0 & 0 & (w_{2}+t_2)/2\\
 0 & 0 & w_{3}
    \end{pmatrix},\quad
    \mathbf{t}^{\prime}=
    \begin{pmatrix}
        (w_{1}+t_1)/2\\(w_{2}+t_2)/2\\t_3
    \end{pmatrix},
\end{equation}
the above observation enables us to identify the
modification of parameters induced by (\ref{eq:maxEx1}), providing
knowledge of the underlying phenomenology.
Concretely, a recent experiment demonstrated the existence of dephasing
noise for qubit gates in a nuclear magnetic resonance (NMR) system \cite{Li2024}.
Using the experimental results of the previous reference we compute
the corresponding correlation matrix $\mathbf{C}^{(\textrm{exp})}$,
finding the following block matrices:
\begin{eqnarray}
C_{00}^{(\textrm{exp})} & = & \left(\begin{array}{cc}
1,000 & -0,066-0,368i\\
-0,066+0,368i & 1,000
\end{array}\right)\nonumber \\
C_{01}^{(\textrm{exp})} & = & \left(\begin{array}{cc}
0,003+0,465i & 0,701+0,000i\\
0,199+0,078i & -0,129+0,182i
\end{array}\right)\label{eq:ExpCorr}
\end{eqnarray}
Then, knowing $\mathbf{C}^{(\textrm{exp})}$, we can apply our characterization
methods, most notably finding $M(\Xi_{\mathbf{C}^{(\textrm{exp})}})=0.625>0$,
showing that it requires an active memory.
Consequently, $\mathbf{C}^{(\textrm{exp})}$ stands out for uncovering
dephasing noises\emph{ truly inherent to quantum gates}, distinct
from noise originating at the input or output systems and certifying
this novel phenomenon in real-world quantum computing architectures,
a crucial fact missing in the analysis of \cite{Li2024} but reported here
for the first time.

Additionally,  our methods allow us to investigate the impact of $\Xi_{\mathbf{C}^{(\textrm{exp})}}$ over the affine transformation \eqref{eq:positive-matrix-general}  and, after that, gain further insights of the underlying quantum stochastic processes by adapting noise-fitting methods \cite{Onorati2023}, an exploration deferred for future research.





\subsection{Non freely realizable operations}
\label{subsec:realisation}
It has been argued that the study of channel resources through freely realizable
operations is conceptually clearer and provides a natural partial
order between the channels under study \cite{Winter2019}. However, there
is no reason why all resource-non-generating superchannels should
be freely realizable. A compelling question is whether or not freely realizable
operations cover all physically implementable free operations in a
given channel resource theory, which remains open for most of the standard
theories  \cite{Liu2020,Carlo2021}.

An even more restrictive answer to the previous question is for freely realizable operations with a classical memory to encompass all physically meaningful free operations \cite{Li2020}. Such a conjecture is extremely strong and yet to date it remains open for theories as relevant as that of coherence generation. However,  our previous results allow us to conclusively answer in the negative the strong version of the conjecture.
Note, that taking $\rho$ incoherent in (\ref{eq:passive}), since $\N_{en}$ is MIO, the subsequent memory $\sigma$ is also incoherent. Because $\sigma$ is passive, the above holds for any $\rho$, and then the simulation must be equivalent to that of a classical memory i.e. $\fc=\sum_i q_{i} \mathrm{C}^{(i)}_1\otimes \mathrm{C}^{(i)}_2$.
Conversely, one can implement any classical memory simulation using a passive incoherent memory by writing the initial state as quantum-classical, using the classical system to correlate dephasing channels --which are MIO-- and then discarding the classical system. Our previous result (\ref{eq:diagonal-entries}) requires the above passive memory implementations to have a Gram matrix $\fc$ with identical diagonal elements  at every off-diagonal block matrix, constituting a measure zero set among all possible  dephasing superchannels, hence with infinitely many counterexamples, in particular Gram matrix (\ref{eq:maxEx1}).

We could finish our discussion about freely realizable dephasing superchannels with the above remark, nonetheless, by increasing the dimension of the target system, we found another family of counterexamples worth mentioning. In fact, consider the following family of dephasing superchannels $\Xi_{\mathbf{C}\left(\alpha,\beta\right)}$
which act on qutrit channels, with correlation matrices $\mathbf{C}\left(\alpha,\beta\right)$
given by:
\begin{align}
\mathbf{C}\left(\alpha,\beta\right)=\left(\begin{array}{ccc}
I & A\left(\alpha\right) & B\left(\beta\right)\\
A^{\dagger}\left(\alpha\right) & I & \mathbf{0}\\
B^{\dagger}\left(\beta\right) & \mathbf{0} & I
\end{array}\right)\label{eq:family}
\end{align}
with $I$ and $\mathbf{0}$ the identity and null matrices of dimension $3$,
and
\begin{eqnarray*}
A\left(\alpha\right)=\left(\begin{array}{ccc}
0 & 0 & 0\\
0 & 0 & 0\\
\alpha & 0 & 0
\end{array}\right) &, \ & B\left(\beta\right)=\left(\begin{array}{ccc}
\beta & 0 & 0\\
0 & 0 & 0\\
0 & 0 & 0
\end{array}\right),
\end{eqnarray*}
such that $\alpha,\beta$ are complex numbers in the unitary disk. Inside the above family  (\ref{eq:family}), the particular case $\alpha=\beta=1$ was presented in \cite{DephSuper2021} to show the non-simulability of general dephasing superchannels by pre- and post-processing of dephasing channels, which as we mentioned are equivalent to the freely realizable with passive memory. The crucial step in the mentioned proof is that the partial transpose of the Gram matrix $\mathbf{C}\left(1,1\right)$ has a negative eigenvalue and hence the Peres-Horodecki criterion guarantees that $\mathbf{C}\left(1, 1\right)$ is entangled. We can extend the same argument to the whole family $\mathbf{C}\left(\alpha,\beta\right)$, since their partial transpose has a smallest eigenvalue of $1-\sqrt{\left|\alpha\right|^{2}+\left|\beta\right|^{2}}$, which is negative iff $\left|\alpha\right|^{2}+\left|\beta\right|^{2}>1$.  While the above argument shows the impossibility of simulation due to entanglement, it omits some counterexamples revealed by our criteria.

Indeed, by invoking  Theorem \ref{thm:gram-pre-post2} and consequently Eq.\eqref{eq:diagonal-entries}, we directly find that those $\mathbf{C}\left(\alpha,\beta\right)$ satisfying $\left|\beta\right|^{2}>0$ define a superset of counterexamples to those witnessed by entanglement. The above shows that memory activity underlies the strength of the correlations in the Gram matrix. In this way, our Theorem \ref{thm:gram-pre-post2} allows us to clarify that memory activity rather than entanglement is the dominant factor in the non-simulation of noise.



Returning to the problem of the original conjecture:  Could we use MIO operations and an arbitrary active memory but initially incoherent to reproduce family $\mathbf{C}\left(\alpha,\beta\right)$? The answer is yes, and we show (Appendix~\ref{appD})  their implementation using an active memory and MIO; hence  we discard it as a source of counterexamples of the original conjecture. On the other hand, a research direction would be to exploit the formula (\ref{eq:gram-pre-post1})
and the analytical form of incoherent operations (IO) \cite{Chitambar2016} to make a numerical search of counterexamples in the qubit case, but by gradually increasing the dimension of the memory, which we leave for future research.

Nonetheless, we can observe that our main results unequivocally resolve the strong version of the simulation conjecture by freely realizable operations. The above underscores the tangible importance of Theorems \ref{thm:gram-pre-post} and \ref{thm:gram-pre-post2} while also pointing towards their potential to significantly contribute to resolving the original conjecture.

\section{Discussion}
\label{sec:discussion}
The inherent sensitivity of quantum effects to noise poses a significant challenge to harnessing quantum advantages for practical purposes. Consequently, understanding how noise affects quantum gates and systems is essential for designing robust quantum devices. In this work, we've addressed the previous challenge by extending a recent model of dephasing noise in quantum gates. Our main contribution is a formula that simplifies the task of determining if a set of operations and memory can simulate a specific dephasing noise.

One of the key aspects explored in our research is the role of memory activity in simulating dephasing superchannels. We have categorized memory into two types: passive and active. Surprisingly, our findings suggest that even in the case of qubit systems, active memories are necessary, challenging previous expectations \cite{DephSuper2021}. We also introduced a quantifier of memory activity required to simulate a dephasing noise and provided an analytical formula to compute it in the qubit case. Moreover, we apply our methods to analyze data from a recent experiment in a nuclear magnetic resonance (NMR) system \cite{Li2024} to certify the existence of an intrinsic gate's dephasing noise, irreducible to dephasing to either input or output states, for the very first time. The evidence unveiled by our methods provides a solid motivation for further study of the non-trivial noise manifestations of dephasing superchannels in the current quantum computing architectures \cite{metodi2011quantum,Bourassa2021blueprintscalable,Gullans2024}.


Our research also has significant implications for resource theories, which are essential for classifying and quantifying quantum resources. Specifically, we have examined the conjecture that any free operation within the resource theory of coherence generation can be implemented using memory and pre- and post-processing maximally incoherent operations (MIO). For the stronger version of this conjecture, which considers solely the use of classical memories, we have identified dephasing superchannels that serve as compelling counterexamples. This finding challenges existing assumptions and opens up new avenues for understanding free operations in the coherence generation resource theory.

As our research has highlighted, there is still much to explore in the realm of dephasing superchannels and their relevance for understanding memory and quantum operations. One intriguing research direction is investigating the physical meaning of correlations in the Gram matrices, specifying the dephasing noise. Despite memory activity overshadowing these correlation's role in determining non-trivial dephasing superchannels, a better explanation of the physical origin of these correlations is still needed. We expect that such an explanation could contribute to the profound question of the relationship between communication and memory, both associated with spatial and temporal correlations, respectively \cite{Brunner2014}.

While our research primarily focuses on the theoretical aspects of quantum information, the knowledge gained can have far-reaching consequences for the development of quantum technologies. Understanding the role of memory and the impact of noise on quantum operations is crucial for building more reliable and efficient quantum devices. Concretely, our work makes compelling the detection of non-trivial dephasing noise in gates on current architectures and the identification of appropriate error mitigation protocols for their new phenomenology.

In conclusion, our research into dephasing superchannels, memory activity, and their implications for resource theories represents a significant contribution to the field of quantum information. The unexpected findings challenge established assumptions and provide a fertile ground for further exploration. We are convinced that as quantum technologies continue to advance, the insights gained from our work will play a pivotal role in realizing the full potential of quantum devices.

\section*{Acknowledgement}
The authors would like to thank Carlo Maria Scandolo and Koorosh Sadri for their valuable comments and correspondence. RS acknowledges support by the Foundation for Polish Science (IRAP project, ICTQT, contract no.2018/MAB/5, co-financed by EU within Smart Growth Operational Programme). Financial support by the Foundation for Polish Science through TEAM-NET project (contract no. POIR.04.04.00-00-17C1/18-00) is acknowledged by FS.

\appendix

\section{Proof of Theorem~\ref{thm:gram-pre-post}}
\label{app:CBasedOnMaps}
Here we will show how the correlation matrix $\bc$ can be reconstructed knowing the encoding and decoding maps, $\N_{en}$ and $\N_{de}$, introduced in  Eq.~\eqref{eq:superchannel}. Let  $\Phi^{\mathsmaller{en}}$ and $\Phi^{\mathsmaller{de}}$ respectively denote the superoperator of  $\N_{en}$ and $\N_{de}$ and let $\tau$ be the initial state of the environment in the realization of a dephasing superchannel \eqref{eq:diagram1}. Moreover, let $\Phi$ denote the superoperator representation of the input quantum operation and note that to be acted on by the superchannel, it does not have to be trace-preserving. However, being a quantum operation means it is completely positive and its associated Jamio\l kowski state $J$, see~Eq.\eqref{eq:choi}, is positive.  We apply Eq.~\eqref{eq:jamshur} to form $\bc$. To do that, we employ Eq.~\eqref{eq:reshuffling} and the fact the superoperator representation for a concatenation of any two operations, $\E=\E_2\circ\E_1$, respects $\Phi^\E=\Phi^{\E_2}\Phi^{\E_1}$. Hence,
\begin{align}
    \left(J\odot\fc\right)_{\stackrel{\scriptstyle p i}{q j}}=
    \frac{1}{d}\sum_{\stackrel{\scriptstyle m n}{k l}}\sum_{\stackrel{\scriptstyle \theta,\gamma,\eta}{\alpha,\beta}}\Phi^{\mathsmaller{de}}_{\stackrel{\scriptstyle p\theta q\theta}{k\gamma l \eta }}\Phi_{\stackrel{\scriptstyle k l}{m n}}\Phi^{\mathsmaller{en}}_{\stackrel{\scriptstyle m\gamma n \eta}{i\alpha j \beta}}\ \tau_{\alpha\beta},
\end{align}
The above equality implies that for any $i,j,p,q$ and positive $J$
\begin{align}
    d J_{\stackrel{\scriptstyle p i}{q j}}\bc_{\stackrel{\scriptstyle p i}{q j}}=\sum_{\stackrel{\scriptstyle m n}{k l}}\sum_{\stackrel{\scriptstyle \theta,\gamma,\eta}{\alpha,\beta}}\Phi^{\mathsmaller{de}}_{\stackrel{\scriptstyle p\theta q\theta}{k\gamma l \eta }}\Phi_{\stackrel{\scriptstyle k l}{m n}}\Phi^{\mathsmaller{en}}_{\stackrel{\scriptstyle m\gamma n \eta}{i\alpha j \beta}}\ \tau_{\alpha\beta},
\end{align}
which in turn gives
\begin{align}
    &\sum_{\stackrel{\scriptstyle m n}{k l}} d J_{\stackrel{\scriptstyle km}{ln}}\times\\ \nonumber
    &\left(\bc_{\stackrel{\scriptstyle pi}{qj}}\delta_{kp}\delta_{mi}\delta_{lq}\delta_{jn}-\sum_{\stackrel{\scriptstyle \theta,\gamma,\eta}{\alpha,\beta}}\Phi^{\mathsmaller{de}}_{\stackrel{\scriptstyle p\theta q\theta}{k\gamma l \eta }}\Phi^{\mathsmaller{en}}_{\stackrel{\scriptstyle m\gamma n \eta}{i\alpha j \beta}}\ \tau_{\alpha\beta}\right)=0,
\end{align}
where we used Eq.~\eqref{eq:reshuffling} for $\Phi$. The above holds for any $J$ corresponding to a quantum operation. This means all positive matrices of dimension $d^2$ with a trace less than unity. The set of these matrices spans the entire set of matrices of the same dimension.
Next, we define $d^4$ matrices $B^{\stackrel{\scriptstyle pi}{qj}}$ for any $i,j,p,q$ as
\begin{align}
   B^{\stackrel{\scriptstyle  pi}{qj}}&=\bc_{\stackrel{\scriptstyle pi}{qj}}\ket{qj}\bra{pi}\\ \nonumber
   &-\sum_{\stackrel{\scriptstyle lm}{kn}}\left(\sum_{\stackrel{\scriptstyle \theta,\gamma,\eta}{\alpha,\beta}}\Phi^{\mathsmaller{de}}_{\stackrel{\scriptstyle p\theta q\theta}{k\gamma l \eta }}\Phi^{\mathsmaller{en}}_{\stackrel{\scriptstyle m\gamma n \eta}{i\alpha j \beta}}\ \tau_{\alpha\beta}\right)\ket{ln}\bra{km},
\end{align}
we get for any positive semi-definite  $J$
\begin{align}
    \Tr[JB^{\stackrel{\scriptstyle  pi}{qj}}]=0.
\end{align}
The above implies that for all $i,j,p,q$,  the matrices $B^{\stackrel{\scriptstyle  pi}{qj}}$ have to be zero which results in Eq.~\eqref{eq:gram-pre-post} and completes the proof of Theorem~\ref{thm:gram-pre-post}.

\section{Proof of Eq. \eqref{eq:N-pre-post}}
\label{app:dephasing-on-system}
In this Appendix, we bring a sketch of the proof of Eq.~\eqref{eq:N-pre-post}. The rigorous proof is in line with the proof of Theorem~\ref{thm:gram-pre-post2} which is a stronger version of Eq.~\eqref{eq:N-pre-post}.

Being idempotent under concatenation, the maximally dephasing channel $\D_I$ satisfies for all channels $\E$
\begin{equation}
\label{app-eq:classical}
    \D_I\circ\E\circ\D_I=\D_I\circ \D_I\circ\E\circ\D_I\circ\D_I.
\end{equation}
This means that the classical action of $\E$ and $\E'=\D_I\circ\E\circ\D_I$ are the same, see Eq.~\eqref{eq:classact}. On the other hand, a dephasing superchannel preserves the classical action of any channel by definition. Therefore, the classical actions of $\E$ and $\E'$ are equal with the left and the right following quantum circuits, respectively. This implies, due to Eq.~\eqref{app-eq:classical}, that these two circuits are equivalent. Thus, one can find equality, in the system's degree of freedom, of the block $1$ (highlighted in blue) in each circuit and the same for block $2$ (highlighted in red).
\begin{widetext}

\begin{quantikz}
    &\gate{\D_I}\gategroup[2,steps=2,style={dashed,rounded corners,fill=blue!20,inner sep=2pt},background,label style={label
position=below,anchor=north,yshift=-0.2cm}]{{\sc
1}}&\gate[2]{\N_{en}}&\gate{\E}&\gate[2]{\N_{de}}\gategroup[2,steps=2,style={dashed,rounded corners,fill=red!20,inner sep=2pt},background,label style={label
position=below,anchor=north,yshift=-0.2cm}]{{\sc
2}}&\gate{\D_I}&\qw \\
    & \qw & & \qw & & \qw & \trash{\text{\emph{discard}}}
\end{quantikz}
\begin{quantikz}
    &\gate{\D_I}\gategroup[2,steps=3,style={dashed,rounded corners,fill=blue!20,inner sep=2pt},background,label style={label
position=below,anchor=north,yshift=-0.2cm}]{{\sc
1}}&\gate[2]{\N_{en}}&\gate{\D_I}&\gate{\E}&\gate{\D_I}\gategroup[2,steps=3,style={dashed,rounded corners,fill=red!20,inner sep=2pt},background,label style={label
position=below,anchor=north,yshift=-0.2cm}]{{\sc
2}}&\gate[2]{\N_{de}}&\gate{\D_I}&\qw \\
    & \qw & & \qw&\qw &\qw & \qw &\qw &\trash{\text{\emph{discard}}}
\end{quantikz}

\end{widetext}

\section{Proof of Theorem ~\ref{thm:gram-pre-post2}}
\label{app:proof-of-T2}
Before proving this theorem, we emphasize that the action of the decoding map on those degrees of freedom of the memory that are not in the image of $\N_{en}$ plays no role.
\begin{proof}
We first prove if the superchannel is a dephasing superchannel, then Eq.~\eqref{eq:NS-enc-dec} has to hold. Later, we will prove Eq.~\eqref{eq:NS-enc-dec} is also sufficient.

To prove the first part, applying Theorem~\ref{thm:gram-pre-post}, let in Eq.~\eqref{eq:NS-enc-dec} $p=q$, $i=j$, $k=m$, and $l=n$. Thus,
\begin{align}
    \fc_{\stackrel{\scriptstyle im}{in}}\delta_{ip}=\sum_{\stackrel{\scriptstyle\theta,\gamma,\eta}{\alpha,\beta}}\Phi^{\mathsmaller{de}}_{\stackrel{\scriptstyle i\theta i\theta}{p\gamma p\eta}}\ \Phi^{\mathsmaller{en}}_{\stackrel{\scriptstyle m\gamma n\eta}{m\alpha n\beta}}\ \tau_{\alpha,\beta}.
\end{align}
Now, we sum over all indices $i$ in both sides of the above equation
\begin{align}
    \mathbf{C}_{\stackrel{\scriptstyle p m}{p n}}&=\sum_i\sum_{\stackrel{\scriptstyle\theta,\gamma,\eta}{\alpha,\beta}}\Phi^{\mathsmaller{de}}_{\stackrel{\scriptstyle i\theta i\theta}{p\gamma p\eta}}\ \Phi^{\mathsmaller{en}}_{\stackrel{\scriptstyle m\gamma n\eta}{m\alpha n\beta}}\ \tau_{\alpha,\beta}\nonumber\\[3pt]
    &=\sum_i\sum_{\theta,\gamma,\eta}\bra{i\theta}\N_{{de}}\left(\proj{p}\otimes\ketbra{\gamma}{\eta}\right)\ket{i\theta}\times\nonumber\\[-8 pt]
    &\qquad\qquad\ \ \ \bra{m\gamma}\N_{{en}}\left(\ketbra{m}{n}\otimes\tau\right)\ket{n\eta}\nonumber\\[3pt]
    &=\sum_{\gamma,\eta}\Tr\left[\N_{{de}}\left(\proj{p}\otimes\ketbra{\gamma}{\eta}\right)\right]\times\nonumber\\[-8 pt]
    &\qquad\quad \bra{m\gamma}\N_{{en}}\left(\ketbra{m}{n}\otimes\tau\right)\ket{m\eta}\nonumber\\[3pt]
    &=\sum_{\gamma,\eta}\Tr\left[\proj{p}\otimes\ketbra{\gamma}{\eta}\right]\times\nonumber\\[-8 pt]
    &\qquad\quad \bra{m\gamma}\N_{{en}}\left(\ketbra{m}{n}\otimes\tau\right)\ket{n\eta}\nonumber\\[3pt]
    &=\bra{m}\Tr_{2}\left[\N_{en}\left(\ketbra{m}{n}\otimes\tau\right)\right]\ket{n},
\end{align}
where the second equality is because of Eq.~\eqref{eq:superoperator}. Also, we applied the trace preserving property of the decoding channel to obtain the fourth equation. Applying the above equation, together with the fact that $\N_{en}$ and partial trace are two linear maps, shows that \eqref{subeq:enc-deph} must hold. Also, by noticing that the diagonal block of the correlation matrix is fixed, one gets that the left-hand side of the above equation is indeed independent of $p$ and that $\mathrm{C}_{en}=C_{00}$.

To prove \eqref{subeq:dec-deph}, we assume now in Eq.~\eqref{eq:gram-pre-post} that $k=l=m=n$, and we make use of Eq.~\eqref{subeq:enc-deph}
\begin{align}
    \mathbf{C}_{\stackrel{\scriptstyle i m}{j m}}\delta_{ip}\delta_{jq}&=\sum_{\stackrel{\scriptstyle\theta,\gamma,\eta}{\alpha,\beta}}\Phi^{\mathsmaller{de}}_{\stackrel{\scriptstyle i\theta j\theta}{p\gamma q\eta}}\ \Phi^{\mathsmaller{en}}_{\stackrel{\scriptstyle m\gamma m\eta}{m\alpha m\beta}}\ \tau_{\alpha,\beta}\\[3pt]
    &=\sum_{\theta,\gamma,\eta}\bra{i\theta}\N_{{de}}\left(\ketbra{p}{q}\otimes\ketbra{\gamma}{\eta}\right)\ket{j\theta}\times\nonumber\\[-8 pt]
    &\qquad\ \ \ \ \bra{m\gamma}\N_{{en}}\left(\ketbra{m}{m}\otimes\tau\right)\ket{m\eta}\nonumber\\[3pt]
    &=\sum_{\gamma,\eta}\bra{i}\Tr_2\left[\N_{{de}}\left(\ketbra{p}{q}\otimes\ketbra{\gamma}{\eta}\right)\right]\ket{j}\times\nonumber\\[-8 pt]
    &\qquad\quad \bra{m\gamma}\left(\proj{m}\otimes\sigma_{m}\right)\ket{m\eta}\nonumber\\[3pt]
   &=\bra{i}\Tr_2\left[\N_{{de}}\left(\ketbra{p}{q}\otimes\sigma_{m}\right)\right]\ket{j}.
\end{align}
Here, we applied the fact that $\N_{en}\left(\proj{m}\otimes\tau\right)=\proj{m}\otimes\sigma_{m}$ holds because the state $\proj{m}$ does not change under a dephasing channel and the only way for a bipartite state to have a pure marginal is to be product. Again, the above equation and the linearity of the maps $\N_{de}$ and $\Tr_2$ complete the proof of the necessity of  Eq.~\eqref{subeq:dec-deph} for the decoding map to give a dephasing superchannel.

To prove the inverse, we assume that Eq.~\eqref{eq:NS-enc-dec} holds and show the result will be a dephasing superchannel, i.e., it results in Eq.~\eqref{eq:invariant-classical}. In that order, note that for all $\rho$ and $\E$ it holds that
\begin{align}
    &\D_I\circ\Xi(\E)\circ\D_I(\rho)=\nonumber\\
    &(\D_I\otimes\Tr_2)\circ\N_{de}\circ(\E\otimes\I)\circ\N_{en}\left(\D_I(\rho)\otimes\tau\right)=\nonumber\\
    &\sum\rho_{mm}(\D_I\otimes\Tr_2)\circ\N_{de}\circ(\E\otimes\I)\circ\N_{en}\left(\proj{m}\otimes\tau\right)=\nonumber\\
    &\sum\rho_{mm}(\D_I\otimes\Tr_2)\circ\N_{de}\left(\E(\proj{m})\otimes\sigma_{m}\right)=\nonumber\\
    &\sum\rho_{mm}\D_I\circ\D_{\mathrm{C^m_{de}}}\left(\E(\proj{m})\right)=\nonumber\\
    &\D_I\circ\E\left(\sum\rho_{mm}\proj{m}\right)=\D_I\circ\E\circ\D_I(\rho),
\end{align}
where we used Eq.~\eqref{subeq:enc-deph} in the third inequality and Eq.~\eqref{subeq:dec-deph} in the forth. The fifth equality is obtained by employing the fact that the concatenation of any dephasing channel with the maximally dephasing one results in the latter, i.e., $\D_{\mathrm{C}}\circ\D_I=\D_I\circ\D_{\mathrm{C}}=\D_I$. The above shows that Eq.~\eqref{eq:NS-enc-dec} provides the sufficient condition to have a dephasing superchannel and completes the proof.
\end{proof}
\section{Evolution of off-diagonal elements for the case when the ancillary system plays the memory role}\label{app:off-diagonal}
To precisely show that even a passive memory affects system evolution, we use the Stinespring dilation of the map $\N_{en}$. To this end, let $U_{en}$ define the unitary operator in a Stinespring dilation of $\N_{en}$. Then, for two basis states $\ket{m}$ and $\ket{n}$ of the system, we get
\begin{subequations}
    \begin{align}
        U_{en}\ket{m\otimes\lambda\otimes0}&=\ket{m}\otimes\ket{\Psi_{\sigma^{\lambda,m}}}\nonumber\\
        &=\ket{m}\otimes\sum_{\alpha}\sqrt{p^{\lambda}_{\alpha}}\ket{\phi^{\lambda}_{\alpha}}\ket{\psi_{\alpha}^{\lambda,m}},\\
         U_{en}\ket{n\otimes\lambda\otimes0}&=\ket{n}\otimes\ket{\Psi_{\sigma^{\lambda,n}}}\nonumber\\
        &=\ket{n}\otimes\sum_{\alpha}\sqrt{p^{\lambda}_{\alpha}}\ket{\phi^{\lambda}_{\alpha}}\ket{\psi_{\alpha}^{\lambda,n}},
    \end{align}
\end{subequations}
where $\ket{0}\in \H_E$ is the environmental state needed in the Stinespring picture.  Moreover, we applied the fact that the basis states $\ket{i}$ for all $i$, and thus $i=m,n$, do not change. Therefore, they remain pure and have a product structure. Additionally, in the second equation in both equalities above, we wrote the state  $\ket{\psi_{\alpha}^{\lambda,i}}\in \H_D\otimes\H_E$ in the Schmidt basis applying the assumption that $\sigma^{\lambda,i}$ is $i$-independent. Thus the off-diagonal element $\ketbra{m}{n}$ evolves as
\begin{align}
    \N_{en}\left(\ketbra{m}{n}\right)&=\Tr_{DE}\left(U_{en}\ketbra{m\otimes\lambda\otimes0}{n\otimes\lambda\otimes0}U_{en}^\dagger\right)\nonumber\\
    &=\sum_\alpha p^{\lambda}_\alpha\bra{\psi_{\alpha}^{\lambda,n}}{\psi_{\alpha}^{\lambda,m}}\rangle \ketbra{m}{n}=C^\lambda_{mn}\ketbra{m}{n}.
\end{align}
The above shows how the ancillary system can affect the main system even once it has the role of memory with an invariant state.
\section{Explicit Physical Realization of $\mathbf{C}\left(\alpha,\beta\right)$}
\label{appD}
In this appendix, we will introduce the explicit form of controlled unitary gates in Eqs.~\eqref{eq:CU} and \eqref{eq:CV} by which we can realize the dephasing superchannel corresponding to  $\mathbf{C}\left(\alpha,\beta\right)$ introduced in \eqref{eq:family}. To this end, a set of states forming $\mathbf{C}\left(\alpha,\beta\right)$ will come to assist.
Consider $\ket{\psi_{ik}}:=V_iU_k\ket{00}$, then the Gram matrix reads $\mathbf{C}_{ik,jl}=\bra{\psi_{jl}}\psi_{ik}\rangle$. The set of states corresponding to $\mathbf{C}(\alpha,\beta)$ is defined as $\ket{\psi_{ik}}=\ket{ik}$ for all $i,k=\{1,2,3\}$ but $\ket{\psi_{21}}=\alpha^\ast\ket{13}+\sqrt{1-\abs\alpha^2}\ket{21}$ and $\ket{\psi_{31}}=\beta^\ast\ket{11}+\sqrt{1-\abs\beta^2}\ket{31}$.
Clearly, this set is not unique. Applying these states it is straightforward to find a set of unitary operations for the realization of $\mathbf{C}(\alpha,\beta)$. For a given set of states, these unitary operators are not unique since the only thing that plays a role is their action on the state $\ket{00}$. As an example of such a set of unitary operators applicable as a pre- and post-processing  to form the dephasing superchannel corresponding to $\mathbf{C}(\alpha,\beta)$ through Eqs.~\eqref{eq:CU} and \eqref{eq:CV}, we introduce $U_1=V_1=I$, $U_2=I\otimes \Pi_{12}$, $U_3=I\otimes\Pi_{13}$ with $\Pi_{mn}$ being the permutation of $m$ and $n$, and $V_2$, $V_3$ any unitary matrix with the following action
\begin{align*}
V_2\ket{11}&=\ket{\psi_{21}},\quad\quad V_3\ket{11}=\ket{\psi_{31}},\\
V_2\ket{12}&=\ket{\psi_{22}},\quad\quad V_3\ket{12}=\ket{\psi_{32}},\\
V_2\ket{13}&=\ket{\psi_{23}},\quad\quad V_3\ket{13}=\ket{\psi_{33}}.
\end{align*}

\bibliography{dephasing-superchannels}

\end{document}